\def\F{\mathbb{F}}
\def\C{\mathcal{C}}
\def\Tr{\text{\rm Tr}}
\def\wt{\text{\rm wt}}
\def\ker{\text{\rm ker}}
\def\ord{\text{\rm ord}}
\begin{document}
\baselineskip  6mm

\title{Weight distributions of several families of 3-weight binary linear codes
\thanks{This research is supported by National Natural Science Foundation of China (No. 61602342, 11701001) and Anhui
Provincial Natural Science Foundation (No. 1908085MA02). }
}
\subtitle{}


\author{Fei Li \ \ Xiumei Li
}


\institute{Fei Li \at
              \email{cczxlf@163.com}           
           \and
Faculty of School of Statistics and Applied Mathematics,
Anhui University of Finance and Economics, Bengbu, {\rm 233041}, Anhui, P.R.China\\
Xiumei Li \at \email{lxiumei2013@qfnu.edu.cn};
School of Mathematical Sciences,
Qufu Normal University, Qufu, {\rm 273165}, Shandong, P.R.China\\}

\date{Received: date / Accepted: date}

\maketitle

\begin{abstract}
The linear codes with a few weights have been
applied widely in combinatorial designs, secret sharing,
association schemes, authentication codes and strongly regular graphs.
In this paper, we first correct an erroneous result about the exponential sum
$\sum_{x\in \mathbb{F}_{2^{e}}}\chi_1\left(ax^{2^{\alpha}+1}+bx\right).$
Then, using the above exponential sum, we construct several families of binary linear codes of $3$-weight and
determine their weight distributions. Moreover, Most of them can be used in secret sharing schemes.

\keywords{weight distribution \and binary linear code \and exponential sum}
\subclass{ 94B05 \and 11T71}
\end{abstract}

\section{Introduction}
\label{intro}
Let $q=2^{e}$ for a positive integer $e$.
Denote by $ \F_q $ the finite field with $ q $
elements and $\F_q^{*}$ the multiplicative group of $\F_q$. Let
$g$ be a fixed generator of $\F_{q}^{*}$.

An $[n,k,\delta]$ binary linear code $\mathcal{C}$ is a $k$-dimensional subspace of $\mathbb{F}_{2}^{n}$
with minimum (Hamming) distance $\delta$. For $i\in\{1,2,\cdots,n\}, A_{i}$ denotes the number of codewords in $\mathcal{C}$ with weight $i$.
The sequence $(1,A_{1},A_{2},\cdots,A_{n})$ is called the weight distribution of the code $\mathcal{C}$ and the
polynomial defined by
$$
1+A_1x+A_2x^{2}+\cdots+A_{n}x^{n},
$$
is called the weight enumerator of $\C$. A code $\C$ is called $t$-weight if  $|\{i:A_i\neq 0,1\leq i\leq n\}| = t $.

The weight distribution of linear codes is a significant and hot research topic in coding theory and much attention has been paid \cite{5CK12,9DL16,DLM11,19LF08,SY20,23YY17}.
It can give the minimum distance of the code, hence the error correcting capability.
It is well-known that the weight distributions of codes allow the computation of the error
probability of error detection and correction with respect to some algorithms \cite{16K11}.

Recently, the linear codes with a few weights have become a hot research topic, since they are
applied widely in combinatorial designs \cite{O17}, secret sharing \cite{21YD06},
association schemes \cite{4CG84}, authentication codes \cite{8DH07} and strongly regular graphs \cite{5CK86}. Many studies about them have been done, see \cite{JL19,16KY19,LC17,LL18,26TX17,LY17}.

Ding et al. \cite{5DJ15,6DD14} gave the generic construction of
linear codes from defining set. Let $D=\{d_1,d_2,\ldots, d_n\} \subseteq \F_{q}^{*}$ and $\Tr$ denote the
trace function from $\F_q$ to $\F_2$. A binary linear code $\mathcal{C}_D$ of length $n$ is defined by
$$
\mathcal{C}_D=\{(\Tr(xd_1), \Tr(xd_2),\ldots, \Tr(xd_{n})):x\in \F_{q}\},
$$
where $D$ is called the defining set of $\mathcal{C}_D$. Using the above method, many linear codes with good parameters can be obtained by choosing the defining set D properly, such as \cite{25DD15,9DL16,SY20,19TXF17,23YY17,24ZL16}.

Motivated by Ding's construction, Li et al. \cite{LBY19} generalized Ding's method. Recall that the ordinary inner product of vectors $\mathbf{u}=(u_{1},u_{2},\cdots,u_{s})$, $\mathbf{v}=(v_{1},v_{2},\cdots,v_{s}) \in \mathbb{F}_{q}^{s} $ is
$$
\mathbf{\mathbf{u}}\cdot \mathbf{v}=u_{1}v_{1}+u_{2}v_{2}+\cdots+u_{s}v_{s}.
$$
Let $ D= \{\mathbf{d}_{1},\mathbf{d}_{2},\cdots,\mathbf{d}_{n}\}$
be a subset of $\F_{q}^{s}\backslash\{\mathbf{0}\} $,
a binary linear code $\mathcal{C}_D$ of length $n$ is defined by
\begin{eqnarray*}
         \C_{D}=\{\left( \Tr(\mathbf{x}\cdot \mathbf{d}_1), \Tr(\mathbf{x}\cdot \mathbf{d}_2),\ldots, \Tr(\mathbf{x}\cdot \mathbf{d}_{n})\right):\mathbf{\mathbf{x}}\in \mathbb{F}_{q}^{s}\}.
\end{eqnarray*}
Here the set $ D $ is also called the defining set of $\C_D$. Using the above generalized method, some classes of linear codes of two-weight and three weight are obtained, see \cite{JL19,LY17,LBY19}.

Inspired by the idea of Li et al. \cite{LBY19}, we choose the following defining set
\begin{equation}\label{def-set} D=D_{(a,b)}=\Big\{(x,y)\in \F_{q}^{2}\setminus\{\mathbf{0}\}: \Tr(ax^{2^{h}+1}+by)=0\Big\} = \{\mathbf{d}_1,\mathbf{d}_2,\ldots,\mathbf{d}_n\},
\end{equation}
where $a\in \mathbb{F}_{q}^{\ast}, \ b\in \mathbb{F}_{q}$ and $h$ is a proper divisor of $e$. The corresponding binary linear code $\C_{D}$ is defined by
\begin{equation}\label{defcode1}
\C_{D}=\Big\{\left( \Tr(\mathbf{x}\cdot \mathbf{d}_1), \Tr(\mathbf{x}\cdot \mathbf{d}_2),\ldots, \Tr(\mathbf{x}\cdot \mathbf{d}_{n})\right):\mathbf{x}\in \mathbb{F}_{q}^{2}\Big\}.
\end{equation}
In this paper, we mainly use exponential sum to determine their parameters and weight distributions.

The rest of this paper is organized as follows. In section 2 we introduce some basic
background such as trace function, canonical additive character and results about an exponential sum, which are very useful to get our results.
Moreover, we correct a result of Theorem 5.3 \cite{26LL18} and its proof.
In section 3 we present the parameters of several classes
of three-weight linear codes. We also give some examples. Section 4 summarizes this paper.


\section{Preliminaries}
\label{Pre}

In this section, we present some basic background concerning trace function, canonical additive character(see \cite{16LN97}).
Some known results about the exponential sum $S_{\alpha}(a,b)$ are also recalled.

Let $\Tr_{t}$ be the trace function from $\F_{q}$ to its subfield $\F_{2^{t}}$, that is,
for each $x\in \F_{q}$,
$$
\Tr_{t}(x)=x+x^{2^{t}}+ \cdots +x^{2^{t(\frac{e}{t}-1)}}.
$$
Simply, denote by $\Tr$ the \textit{absolute trace function} $\mathrm{Tr_{1}}.$
The \textit{canonical additive character} $\chi_{1}$ over $\F_{q}$ is defined by
$
\chi_{1}(x)=\exp(\frac{2\pi i\Tr(x)}{2}) = (-1)^{\Tr(x)}
$, where $x\in \F_{q}$.

Let $\alpha$ be a positive integer and $d=\gcd(e,\alpha)$. An exponential sum $S_{\alpha}(a,b)$ is defined as follows:
$$S_{\alpha}(a,b)=\sum_{x\in \F_{q}}\chi_1\left(ax^{p^{\alpha}+1}+bx\right),$$ where $ a\in\F_{q}^{*}, \ b\in \F_q$ and $p = 2$.

The evaluation of $S_{\alpha}(a,b)$ in odd characteristic ($p$ odd) were determined explicitly in \cite{5CA80,5CO98,5C98},
and in characteristic $2$ were solved in \cite{5CA79,26LL18}. In the following sequel, we shall give some lemmas that are essential in proving our main
results.

\begin{lemma}\cite[Theorem 4.1]{26LL18}\label{lem:1}  When $e/d$ is odd, we have
$$\sum_{x\in \F_{q}}\chi_1\left(ax^{2^{\alpha}+1}\right)=0, $$
for each $a\in\F_{q}^{*}.$ \end{lemma}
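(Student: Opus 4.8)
The plan is to evaluate the sum $T:=\sum_{x\in\F_q}\chi_1(ax^{2^\alpha+1})$ directly by analyzing the linearized map attached to the quadratic form $Q(x)=\Tr(ax^{2^\alpha+1})$. The standard approach for such Weil-type sums in characteristic $2$ is to use the identity $|T|^2 = \sum_{x,y}\chi_1\!\big(a(x^{2^\alpha+1}+y^{2^\alpha+1})\big)$ and, after the substitution $x=y+z$, to collect terms so that the inner sum over $y$ forces a linearized equation $L_a(z):=a^{2^\alpha}z^{2^{2\alpha}}+az=0$ (up to the appropriate Frobenius twist). Concretely, $Q(x+y)-Q(x)-Q(y)$ is a bilinear form $B(y,z)$ in characteristic $2$, and $\sum_y \chi_1(B(y,z))$ is $q$ if $z$ lies in the radical of $B$ and $0$ otherwise; the radical is exactly $\ker L_a$. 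This reduces the computation of $|T|^2$ to counting solutions of $L_a(z)=0$, i.e. to $q\cdot|\ker L_a| \cdot (\text{a character sum that is } 0 \text{ or } q)$ — the precise bookkeeping gives $|T|^2 = q\cdot|\ker L_a|$ when a certain consistency condition holds.

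The key step is then to determine $|\ker L_a|$. Writing $w=z^{2^\alpha}$, the equation $a^{2^\alpha}z^{2^{2\alpha}}+az=0$ becomes, after raising to a suitable power, equivalent to $(w/z)^{2^\alpha-1}\in\F_q$ of a controlled multiplicative order; more cleanly, the $\F_2$-linear map $L_a$ has kernel of size $2^{\gcd(2\alpha,e)}$ or $1$ depending on whether $a$ is a norm-type element. The crucial input is the hypothesis that $e/d$ is \emph{odd}, where $d=\gcd(e,\alpha)$: one checks $\gcd(2\alpha,e)=d$ when $e/d$ is odd (since then $e$ contributes no extra factor of $2$ beyond what $\alpha$ already carries into the gcd), whereas $\gcd(2\alpha,e)=2d$ when $e/d$ is even. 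In the odd case this pins $|\ker L_a|$ down to a small fixed value, and a parity/consistency argument shows the additive character sum over the relevant coset actually vanishes for every $a\in\F_q^*$, forcing $T=0$. I would carry this out by: (i) setting up the bilinear form and its radical; (ii) identifying the radical with $\ker L_a$ and computing its dimension using $\gcd(2\alpha,e)=d$; (iii) showing the linear term / coset condition cannot be satisfied compatibly, so the sum is $0$.

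I expect the main obstacle to be step (iii): showing that the relevant consistency condition fails identically when $e/d$ is odd. It is not enough to bound $|\ker L_a|$; one must verify that the linear functional whose vanishing is required (coming from tracking the cross terms in $|T|^2$ and then extracting $T$ itself via a second, more delicate argument — e.g. relating $T$ to $T$ evaluated at a Frobenius conjugate of $a$, or using the transform $\sum_{b}|S_\alpha(a,b)|^2$) genuinely has no solution in the odd case. This is exactly where the parity of $e/d$ enters decisively, and where the erroneous statement in the literature presumably went wrong; handling it correctly requires carefully separating the case $\alpha/d$ even from $\alpha/d$ odd within the overarching "$e/d$ odd" hypothesis, since the structure of $\ker L_a$ (and whether $-1=1$ considerations degenerate) differs. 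Everything else — the bilinear-form manipulation, the gcd computation, the character orthogonality — is routine.
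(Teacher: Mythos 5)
This lemma is quoted from Coulter's paper, and the paper does not reprove it; the intended argument is a two-line consequence of the paper's own Lemma~\ref{lem:6}: when $e/d$ is odd, $\gcd(2^{\alpha}+1,2^{e}-1)=1$, so $x\mapsto x^{2^{\alpha}+1}$ is a permutation of $\F_{q}$ and the sum equals $\sum_{y\in\F_{q}}\chi_1(ay)=0$ by character orthogonality, since $a\neq 0$. Your proposal takes a far heavier route, and as written it has a genuine gap at exactly the point you yourself flag as the ``main obstacle.'' Worse, the intermediate identity you state, $|T|^{2}=q\cdot|\ker L_{a}|$, is not correct and would prove the opposite of the lemma (it is never $0$). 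The correct identity after the substitution $x=y+z$ is $T^{2}=q\sum_{z\in\ker L_{a}}\chi_1\left(az^{2^{\alpha}+1}\right)$, and the entire content of the lemma is that this inner sum vanishes. Establishing that requires: (a) $\gcd(2\alpha,e)=d$ when $e/d$ is odd, and $\ker L_{a}$ is a nontrivial one-dimensional $\F_{2^{d}}$-subspace $\F_{2^{d}}\beta$ (nontriviality needs an argument: $a^{1-2^{\alpha}}$ lies in the image of $z\mapsto z^{2^{2\alpha}-1}$ because both $2^{\alpha}-1$ and $2^{2\alpha}-1$ have the same gcd $2^{d}-1$ with $2^{e}-1$); (b) for $z=c\beta$ with $c\in\F_{2^{d}}$ one has $c^{2^{\alpha}+1}=c^{2}$ and $a\beta^{2^{\alpha}+1}\in\F_{2^{d}}^{*}$, so the inner sum is $\sum_{u\in\F_{2^{d}}}\chi_1(a\beta^{2^{\alpha}+1}u)$; and (c) this last sum is $0$ by Lemma~\ref{lem:5}, because $\Tr_{d}$ restricted to $\F_{2^{d}}$ is multiplication by $e/d$, which is nonzero precisely because $e/d$ is odd. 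Step (c) is the decisive use of the hypothesis (and is the exact mirror of the computation in the proof of Lemma~\ref{lem:7}, where $e/d$ even forces the opposite conclusion), and none of (a)--(c) appears in your write-up.

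Two smaller points. Your concern about needing ``a second, more delicate argument'' to recover $T$ from $|T|^{2}$ is unfounded: $\chi_1$ takes values $\pm 1$, so $T$ is a real integer and $T^{2}=0$ already forces $T=0$. And no case split on the parity of $\alpha/d$ is needed. In summary, your approach is salvageable but is not yet a proof, and it bypasses the one-line argument that Lemma~\ref{lem:6} was stated to provide.
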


\begin{lemma}\cite[Theorem 4.2]{26LL18}\label{lem:2} Let $b\in\mathbb{F}_{q}^{*} $ and suppose $e/d$ is odd. Then
$$S_{\alpha}(a,b)=S_{\alpha}\left(1,bc^{-1}\right), $$
where $ c\in \mathbb{F}_{q}^{*} $ is the unique element satisfying
$ c^{2^{\alpha}+1}=a. $ Further we have
$$
S_{\alpha}(1,b)=\left\{\begin{array}{ll}
 0, & \textrm{if\ } \mathrm{Tr}_{d}(b)\neq 1, \\
 \pm 2^{\frac{e+d}{2}}, & \textrm{if\ } \mathrm{Tr}_{d}(b)= 1.
 \end{array}
 \right.
$$
\end{lemma}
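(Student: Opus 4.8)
The plan is to reduce the evaluation of $S_{\alpha}(a,b)$ to the normalized sum $S_{\alpha}(1,\cdot)$ by an invertible change of variable, and then to evaluate $S_{\alpha}(1,b)$ by the standard square-and-linearize method together with Lemma~\ref{lem:1}. First I would establish the normalization. Since $e/d$ is odd with $d=\gcd(e,\alpha)$, the map $x\mapsto x^{2^{\alpha}+1}$ is a bijection on $\F_q^{*}$: indeed $\gcd(2^{\alpha}+1,2^{e}-1)=1$ exactly when $e/d$ is odd (a standard fact, proved by noting $\gcd(2^{\alpha}+1,2^{e}-1)=\tfrac{2^{2d}-1}{2^{d}-1}\cdot\gcd(\cdot)$ type identities, or directly from $2^{2\alpha}\equiv 1$ forcing the order of $2$ mod any common divisor to divide $2d$ and $e$, hence $d$). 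Hence there is a unique $c\in\F_q^{*}$ with $c^{2^{\alpha}+1}=a$. Substituting $x\mapsto c^{-1}x$ gives
$$
S_{\alpha}(a,b)=\sum_{x\in\F_q}\chi_1\!\left(a c^{-(2^{\alpha}+1)}x^{2^{\alpha}+1}+b c^{-1}x\right)=\sum_{x\in\F_q}\chi_1\!\left(x^{2^{\alpha}+1}+bc^{-1}x\right)=S_{\alpha}(1,bc^{-1}),
$$
which is the first assertion.

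Next I would evaluate $S:=S_{\alpha}(1,b)$ for $b\in\F_q^{*}$. The key step is to compute $S^{2}$ by the usual doubling trick: write $S^{2}=\sum_{x,y}\chi_1\!\left(x^{2^{\alpha}+1}+y^{2^{\alpha}+1}+b(x+y)\right)$ and substitute $y=x+z$. Using characteristic $2$ and expanding $(x+z)^{2^{\alpha}+1}=x^{2^{\alpha}+1}+x^{2^{\alpha}}z+xz^{2^{\alpha}}+z^{2^{\alpha}+1}$, the terms $x^{2^{\alpha}+1}$ cancel, and one is left with $S^{2}=\sum_{z}\chi_1(z^{2^{\alpha}+1}+bz)\sum_{x}\chi_1\!\left(x^{2^{\alpha}}z+xz^{2^{\alpha}}\right)$. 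The inner sum is over a $\F_2$-linear form in $x$ (after applying the Frobenius $x^{2^{\alpha}}\mapsto$ via $\Tr(x^{2^{\alpha}}z)=\Tr(xz^{2^{-\alpha}})$ type manipulation), so it equals $q$ if the linearized polynomial $L(z):=z^{2^{2\alpha}}+z$ (equivalently $z^{2^{\alpha}}\cdot(\text{stuff})$) vanishes and $0$ otherwise. Since $e/d$ is odd, $\gcd(2\alpha,e)=d$, so the kernel of $L$ is $\F_{2^{d}}$, of size $2^{d}$. Thus $S^{2}=q\sum_{z\in\F_{2^{d}}}\chi_1(z^{2^{\alpha}+1}+bz)$. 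On $\F_{2^{d}}$ one has $z^{2^{\alpha}+1}=z^{2}$ (because $2^{\alpha}\equiv 2^{d\cdot(\alpha/d)}$ reduces appropriately modulo $2^{d}-1$, using $d\mid\alpha$), so $\Tr(z^{2^{\alpha}+1})=\Tr(z^{2})=\Tr(z)$, and $\sum_{z\in\F_{2^{d}}}\chi_1((1+b)z)\cdot(\pm)$ collapses to $\sum_{z\in\F_{2^{d}}}(-1)^{\Tr_{d/1}\text{-term}}$; carefully, $\Tr(cz)=\Tr_{1}^{d}(\Tr_{d}(c)\,z/\cdots)$ — one shows this sum is $2^{d}$ when $\Tr_d(b)=1$ and $0$ otherwise. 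Hence $S^{2}=2^{e+d}$ in the nonzero case and $S=0$ otherwise, giving $S=\pm 2^{(e+d)/2}$ when $\Tr_d(b)=1$ and $S=0$ when $\Tr_d(b)\neq 1$.

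The main obstacle is the careful bookkeeping in the inner-sum evaluation: correctly identifying the linearized polynomial whose kernel controls the inner sum, verifying that its kernel is exactly $\F_{2^{d}}$ (this is where the hypothesis $e/d$ odd, hence $\gcd(2\alpha,e)=d$, is essential), and then correctly reducing the resulting sum over $\F_{2^{d}}$ to the condition $\Tr_d(b)=1$ via transitivity of the trace. A secondary subtlety is that $S^{2}=2^{e+d}$ only pins down $S$ up to sign, so the statement must (and does) leave the sign ambiguous; determining that sign would require finer information (e.g. Stickelberger-type or quadratic-form rank arguments) and is not needed here. I would also double-check the edge interaction with Lemma~\ref{lem:1}: when $b=0$ one is in the pure monomial case and Lemma~\ref{lem:1} already gives $0$, consistent with $\Tr_d(0)=0\neq 1$; this serves as a sanity check on the computation above.
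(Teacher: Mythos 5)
The paper does not actually prove this lemma: it is imported verbatim from Coulter \cite[Theorems~4.1--4.2]{26LL18}, so there is no internal proof to compare against. Your argument is correct and is essentially the standard proof (and Coulter's): the normalization $S_{\alpha}(a,b)=S_{\alpha}(1,bc^{-1})$ follows from the bijectivity of $x\mapsto x^{2^{\alpha}+1}$ on $\F_q^{*}$, which is exactly Lemma~\ref{lem:6} of the paper and which you could simply cite rather than re-derive; and the evaluation of $S:=S_{\alpha}(1,b)$ by squaring, substituting $y=x+z$, and collapsing the inner sum onto the kernel $\F_{2^{\gcd(2\alpha,e)}}=\F_{2^{d}}$ (using $e/d$ odd to get $\gcd(2\alpha,e)=d$) is carried out correctly. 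The only place your writeup is garbled is the final reduction over $\F_{2^{d}}$: the expression ``$\chi_1((1+b)z)$'' should be $(-1)^{\mathrm{Tr}_{\F_{2^d}/\F_2}(z(1+\Tr_{d}(b)))}$, obtained from $z^{2^{\alpha}+1}=z^{2}$ for $z\in\F_{2^{d}}$ (since $d\mid\alpha$), $\Tr(z^{2})=\Tr(z)$, transitivity of the trace, and $\Tr_{d}(z)=({e}/{d})\,z=z$ because $e/d$ is odd; this sum is $2^{d}$ precisely when $\Tr_{d}(b)=1$ and $0$ otherwise, which yields $S^{2}=2^{e+d}$ or $S=0$ as claimed (and $e+d$ is even when $e/d$ is odd, so $2^{(e+d)/2}$ is an integer). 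Your observation that the sign is genuinely left undetermined, and the consistency check against Lemma~\ref{lem:1} at $b=0$, are both apt.
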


\begin{lemma}\cite[Theorem 5.2]{26LL18}\label{lem:3} Let $e/d$ be even so that $e=2m$ for some integer $m$. Then
$$
S_{\alpha}(a,0)=\left\{\begin{array}{ll}
(-1)^{\frac{m}{d}}2^{m}, & \textrm{if\ } a\neq g^{t(2^{d}+1)}  \textrm{\ for any integer\ } t, \\
-(-1)^{\frac{m}{d}}2^{m+d}, & \textrm{if\ } a= g^{t(2^{d}+1)}  \textrm{\ for some integer\ } t,
\end{array}
\right.
$$
where $g$ is a generator of $\mathbb{F}_{q}^{*}$. \end{lemma}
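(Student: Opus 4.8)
The plan is to evaluate $S_{\alpha}(a,0)$ through Gaussian sums, push these down to the subfield $\F_{2^{2d}}$ by a Davenport--Hasse lifting, and compute the resulting ``semiprimitive'' Gaussian sums by an elementary coset count. First I would normalise $a$: since $x\mapsto cx$ is a bijection of $\F_q$ for each $c\in\F_q^{*}$, one has $S_{\alpha}(a,0)=S_{\alpha}\!\bigl(ac^{\,2^{\alpha}+1},0\bigr)$, so $S_{\alpha}(a,0)$ depends only on the coset $a(\F_q^{*})^{2^{\alpha}+1}$. By the standard identity $\gcd(2^{\alpha}+1,\,2^{e}-1)=2^{\gcd(\alpha,e)}+1=2^{d}+1$, which holds precisely because $e/d$ is even (equivalently $2d\mid e$), and by cyclicity of $\F_q^{*}$, the subgroup $(\F_q^{*})^{2^{\alpha}+1}$ is the unique subgroup of index $2^{d}+1$, namely $\langle g^{\,2^{d}+1}\rangle=(\F_q^{*})^{2^{d}+1}$. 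Hence the dichotomy in the statement is exactly whether or not $a\in\langle g^{\,2^{d}+1}\rangle$.

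Since $x\mapsto ax^{2^{\alpha}+1}$ is $(2^{d}+1)$-to-one from $\F_q^{*}$ onto $a\langle g^{\,2^{d}+1}\rangle$, grouping the defining sum and applying orthogonality for the multiplicative characters $\psi$ of $\F_q^{*}$ with $\psi^{2^{d}+1}=1$ gives
\begin{equation*}
S_{\alpha}(a,0)=1+(2^{d}+1)\!\!\sum_{y\in a\langle g^{\,2^{d}+1}\rangle}\!\!\chi_{1}(y)=\sum_{\substack{\psi^{2^{d}+1}=1\\ \psi\neq 1}}\overline{\psi}(a)\,G(\psi),\qquad G(\psi):=\sum_{x\in\F_q^{*}}\psi(x)\,\chi_{1}(x).
\end{equation*}
There are $2^{d}$ such characters $\psi$. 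Because $2^{d}+1\mid 2^{2d}-1\mid 2^{e}-1$ and the norm $N\colon\F_q^{*}\to\F_{2^{2d}}^{*}$ is surjective, each $\psi$ is the lift $\psi=\psi'\circ N$ of a character $\psi'$ of $\F_{2^{2d}}^{*}$ with $(\psi')^{2^{d}+1}=1$; the Davenport--Hasse relation \cite{16LN97} then yields $G(\psi)=(-1)^{j-1}\,G'(\psi')^{\,j}$, where $j:=e/(2d)$ and $G'(\psi')=\sum_{x\in\F_{2^{2d}}^{*}}\psi'(x)\,\chi_{1}'(x)$ with $\chi_{1}'$ the canonical additive character of $\F_{2^{2d}}$ (whose lift to $\F_q$ is precisely $\chi_{1}$, by transitivity of the trace).

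To evaluate $G'(\psi')$ I would use that $\psi'$ is trivial on $\F_{2^{d}}^{*}$ (since $(\F_{2^{2d}}^{*})^{2^{d}+1}=\F_{2^{d}}^{*}$), partition $\F_{2^{2d}}^{*}$ into the $2^{d}+1$ cosets of $\F_{2^{d}}^{*}$, and on a coset with representative $x_{0}$ compute $\sum_{c\in\F_{2^{d}}^{*}}\chi_{1}'(x_{0}c)$ by transitivity of the trace: it equals $2^{d}-1$ when $\Tr_{\F_{2^{2d}}/\F_{2^{d}}}(x_{0})=0$, i.e.\ on the coset $\F_{2^{d}}^{*}$ itself, and $-1$ on each of the other $2^{d}$ cosets. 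Pairing $\psi'$ against these values and using $\sum_{\text{cosets}}\psi'=0$ (valid as $\psi'\neq 1$ induces a nontrivial character of $\F_{2^{2d}}^{*}/\F_{2^{d}}^{*}$) gives $G'(\psi')=(2^{d}-1)-(-1)=2^{d}$ for every admissible $\psi'$. Therefore $G(\psi)=(-1)^{j-1}2^{dj}=(-1)^{j-1}2^{m}$ is the same for all $\psi$, and
\begin{equation*}
S_{\alpha}(a,0)=(-1)^{j-1}2^{m}\!\!\sum_{\substack{\psi^{2^{d}+1}=1\\ \psi\neq 1}}\!\!\overline{\psi}(a)=
\begin{cases}
(-1)^{j-1}2^{m}\cdot 2^{d}, & a\in\langle g^{\,2^{d}+1}\rangle,\\[4pt]
(-1)^{j-1}2^{m}\cdot(-1), & a\notin\langle g^{\,2^{d}+1}\rangle.
\end{cases}
\end{equation*}
Since $m=jd$ forces $(-1)^{j-1}=-(-1)^{m/d}$, these are exactly $-(-1)^{m/d}2^{m+d}$ and $(-1)^{m/d}2^{m}$, as claimed.

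I expect the sign to be the only delicate point. The magnitudes $2^{m}$ and $2^{m+d}$ appear as soon as one knows $|G'(\psi')|=2^{d}$, but pinning down the exact factor $(-1)^{m/d}$ needs both the precise Davenport--Hasse sign $(-1)^{j-1}$ and the fact that each semiprimitive Gaussian sum $G'(\psi')$ is $+2^{d}$ rather than $-2^{d}$ -- the coset computation above must be arranged so that no spurious sign survives, and $m/d=j$ is exactly the parameter that governs the outcome. One must also verify that every character of $\F_q^{*}$ of order dividing $2^{d}+1$ genuinely descends from $\F_{2^{2d}}$, which is where $2d\mid e$ is used.
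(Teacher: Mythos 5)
Your argument is correct: the normalisation $S_{\alpha}(a,0)=S_{\alpha}(ac^{2^{\alpha}+1},0)$, the identification of $(\F_q^{*})^{2^{\alpha}+1}$ with $\langle g^{2^{d}+1}\rangle$ via $\gcd(2^{\alpha}+1,2^{e}-1)=2^{d}+1$, the character-orthogonality expansion $S_{\alpha}(a,0)=\sum_{\psi\neq 1}\overline{\psi}(a)G(\psi)$ (the $+1$ from $x=0$ cancelling the trivial-character Gauss sum $G(1)=-1$), the descent of every $\psi$ of order dividing $2^{d}+1$ through the norm to $\F_{2^{2d}}$ (legitimate because $\ker N$ sits inside $\langle g^{2^{d}+1}\rangle$, which is where $2d\mid e$ enters), the Davenport--Hasse sign $(-1)^{j-1}$ with $j=m/d$, and the coset computation giving $G'(\psi')=+2^{d}$ all check out, and the two cases assemble to exactly $(-1)^{m/d}2^{m}$ and $-(-1)^{m/d}2^{m+d}$. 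However, this is a genuinely different route from the one the paper relies on: the paper does not prove this lemma at all but quotes it from Coulter, and Coulter's method (which the authors reproduce when they rework Theorem 5.3 in their Lemma 7) is entirely ``additive'': one studies the product $S_{\alpha}(a,b)S_{\alpha}(a,0)$, reduces it to counting roots of the linearized polynomial $f(x)=a^{2^{\alpha}}x^{2^{2\alpha}}+ax$, and pins down magnitudes from $|S_{\alpha}(a,0)|^{2}$ and signs from a global count over all $a$. Your Gauss-sum/Davenport--Hasse approach buys a self-contained, conceptually transparent derivation in which the troublesome sign is localised in two clean places (the lifting sign $(-1)^{j-1}$ and the semiprimitive evaluation $G'(\psi')=+2^{d}$ rather than $-2^{d}$), at the cost of importing multiplicative character machinery that the rest of the paper avoids; Coulter's approach is more elementary in its toolkit and meshes directly with the $b\neq 0$ cases (Lemmas 4 and 7), which is presumably why the authors stay within it.
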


\begin{lemma}\cite[Theorem 5.3]{26LL18}\label{lem:4}
Let $b\in\mathbb{F}_{q}^{*} $ and suppose $e/d$ is even so that $e=2m $
for some integer $ m$. Let $ f(x)=a^{2^{\alpha}}x^{2^{2\alpha}}+ax \in \mathbb{F}_{q}[x]$. There are two cases.\
\begin{itemize}
\item[(i)] If $a\neq g^{t(2^{d}+1)}$ for any integer  $t$, then $f$ is a permutation polynomial of
$\mathbb{F}_{q}$. Let $x_{0}$ be the unique element satisfying $f(x)=b^{2^{^{\alpha}}}$. Then
$$S_{\alpha}(a,b)
=(-1)^{\frac{m}{h}}2^{m}\chi_1\left(ax_{0}^{2^{\alpha}+1}\right).$$
\item[(ii)] If $a= g^{t(2^{d}+1)}$ for some integer $t$, then $S_{\alpha}(a,b)=0$ unless the equation
$f(x)=b^{2^{^{\alpha}}} $ is solvable. If this equation is solvable, with solution $ x_{0}$ say, then
$$
S_{\alpha}(a,b)
=\left\{\begin{array}{ll}
-(-1)^{\frac{m}{d}}2^{m+d}\chi_1\left(ax_{0}^{2^{\alpha}+1}\right), & \textrm{if\ } \Tr_{d}(a)= 0, \\
(-1)^{\frac{m}{d}}2^{m}\chi_1\left(ax_{0}^{2^{\alpha}+1}\right),  & \textrm{if\ } \Tr_{d}(a)\neq 0.
\end{array}
\right.
$$
\end{itemize}\end{lemma}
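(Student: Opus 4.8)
The plan is to reduce the evaluation of $S_{\alpha}(a,b)$ to the already-known purely quadratic sum $S_{\alpha}(a,0)$ of Lemma \ref{lem:3} by completing the square. First I would note that for $a\in\F_q^{*}$ and $b\in\F_q^{*}$ the quantity $|S_{\alpha}(a,b)|^{2}$, or rather the shift trick, is the natural tool: since $\chi_1$ is the canonical additive character, one has for any $x_0\in\F_q$ the substitution $x\mapsto x+x_0$, giving
\begin{equation*}
S_{\alpha}(a,b)=\sum_{x\in\F_q}\chi_1\!\left(a(x+x_0)^{2^{\alpha}+1}+b(x+x_0)\right).
\end{equation*}
Expanding $(x+x_0)^{2^{\alpha}+1}=x^{2^{\alpha}+1}+x^{2^{\alpha}}x_0+x\,x_0^{2^{\alpha}}+x_0^{2^{\alpha}+1}$ and using that $\Tr$ is linear and $\Tr(u^{2})=\Tr(u)$, the cross terms contribute a linear form in $x$ whose coefficient, after pulling the Frobenius through the trace, is exactly $f(x_0)^{2^{-\alpha}}$-type expression; choosing $x_0$ so that this linear coefficient cancels $b$ is precisely the condition $f(x_0)=b^{2^{\alpha}}$, where $f(x)=a^{2^{\alpha}}x^{2^{2\alpha}}+ax$. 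With such an $x_0$ the sum collapses to $\chi_1(ax_0^{2^{\alpha}+1})\,S_{\alpha}(a,0)$.

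Next I would analyze the linearized polynomial $f(x)=a^{2^{\alpha}}x^{2^{2\alpha}}+ax$. It is an $\F_2$-linear map on $\F_q$, so it is a permutation iff its kernel is trivial. Dividing by $ax$ (for $x\neq 0$) reduces the kernel condition to $(x^{2^{\alpha}-1})^{2^{\alpha}+1}\cdot$(something) $=a^{1-2^{\alpha}}$-type equation; working it through, the kernel is nontrivial exactly when $a$ lies in the subgroup of $(2^{d}+1)$-th powers, i.e. $a=g^{t(2^{d}+1)}$ for some $t$, where $d=\gcd(e,\alpha)$. This is the dichotomy (i) vs. (ii). In case (i), $f$ is a permutation, a unique $x_0$ exists, and $S_{\alpha}(a,0)=(-1)^{m/d}2^{m}$ by Lemma \ref{lem:3} (here I would flag the typo in the statement: the exponent should read $(-1)^{m/d}$, not $(-1)^{m/h}$), yielding the stated formula. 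In case (ii), $f$ is not injective; its image is a hyperplane-type subspace of codimension $\dim\ker f$, so either $b^{2^{\alpha}}$ is not in the image, forcing the inner substitution argument to show $S_{\alpha}(a,b)=0$ directly (the linear character sum over a coset vanishes), or $b^{2^{\alpha}}$ is in the image, a solution $x_0$ exists, and again $S_{\alpha}(a,b)=\chi_1(ax_0^{2^{\alpha}+1})\,S_{\alpha}(a,0)$ with $S_{\alpha}(a,0)=-(-1)^{m/d}2^{m+d}$ by Lemma \ref{lem:3}.

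The remaining subtlety — and the place where the original Theorem 5.3 of \cite{26LL18} went wrong — is the appearance of the condition $\Tr_{d}(a)=0$ versus $\Tr_{d}(a)\neq 0$ in case (ii). Here $S_{\alpha}(a,0)$ as given by Lemma \ref{lem:3} depends only on whether $a$ is a $(2^{d}+1)$-th power, so that alone cannot distinguish the two sub-subcases; the resolution must be that when $a=g^{t(2^{d}+1)}$, solvability of $f(x)=b^{2^{\alpha}}$ is itself governed by $\Tr_d(a)$ — more precisely, I expect that when $\Tr_d(a)\neq 0$ the value of $x_0$ (which is only determined up to the kernel of $f$) forces $\chi_1(ax_0^{2^{\alpha}+1})$ to absorb an extra sign/normalization, effectively replacing the factor $2^{m+d}$ by $2^{m}$. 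Pinning down this dependence rigorously — checking that $\chi_1(ax_0^{2^{\alpha}+1})$ is well-defined independent of the choice of $x_0$ in its coset, and correctly tracking the factor $2^{d}$ — is the main obstacle and is exactly what the careful re-derivation in this section must supply; I would carry it out by restricting $f$ to a complement of $\ker f$, computing $\sum_{x\in\ker f}\chi_1(ax^{2^{\alpha}+1})$ as an auxiliary Gauss-type sum on the subfield $\F_{2^{d}}$, and observing that this auxiliary sum is $2^{d}$ when $\Tr_d$ of the relevant argument vanishes and a smaller (or sign-twisted) quantity otherwise.
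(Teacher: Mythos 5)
You are attempting to prove a statement that the paper itself declares to be false, and which it does not prove: Lemma \ref{lem:4} is quoted verbatim from \cite[Theorem 5.3]{26LL18}, and immediately after stating it the authors point out that the $\Tr_{d}(a)=0$ versus $\Tr_{d}(a)\neq 0$ dichotomy in part (ii) is wrong. The corrected result, proved as Lemma \ref{lem:7}, is that whenever $f(x)=b^{2^{\alpha}}$ is solvable one has $S_{\alpha}(a,b)=-(-1)^{m/d}2^{m+d}\chi_1\left(ax_{0}^{2^{\alpha}+1}\right)$ regardless of $\Tr_{d}(a)$. Your own first step already exposes this: completing the square gives $S_{\alpha}(a,b)=\chi_1\left(ax_{0}^{2^{\alpha}+1}+bx_{0}\right)S_{\alpha}(a,0)=\chi_1\left(ax_{0}^{2^{\alpha}+1}\right)S_{\alpha}(a,0)$, and in case (ii) Lemma \ref{lem:3} gives $S_{\alpha}(a,0)=-(-1)^{m/d}2^{m+d}$ with no reference to $\Tr_{d}(a)$. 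You correctly sense the tension (``that alone cannot distinguish the two sub-subcases'') but then resolve it in the wrong direction, trying to rescue a false case distinction.

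The rescue mechanism you sketch cannot work. First, $\chi_1$ takes only the values $\pm 1$, so no choice of $x_{0}$ can convert the magnitude $2^{m+d}$ into $2^{m}$. Second, $\chi_1\left(ax_{0}^{2^{\alpha}+1}\right)$ does not depend on which solution $x_{0}$ of $f(x)=b^{2^{\alpha}}$ is chosen: writing the solution set as $\{x_{0}+\beta c: c\in\F_{2^{2d}}\}$ with $\beta$ a fixed nonzero root of $f$, the auxiliary sum you propose to compute is $\sum_{c\in\F_{2^{2d}}}\chi_1\left(a\beta^{2^{\alpha}+1}c^{2^{\alpha}+1}\right)$; since $f(\beta)=0$ forces $a\beta^{2^{\alpha}+1}\in\F_{2^{d}}$, and $\Tr_{d}$ annihilates $\F_{2^{d}}$ when $e/d$ is even, this sum always equals $2^{2d}$ --- it is never the ``smaller or sign-twisted quantity'' you hope for, and $\Tr_{d}(a)$ never enters. (Your incidental remark that the exponent in part (i) should read $(-1)^{m/d}$ rather than $(-1)^{m/h}$ is correct; in the paper's later application $\alpha=h$ divides $e$, so $d=h$ there.) The honest endpoint of your own argument is the corrected statement of Lemma \ref{lem:7}, not the statement as given.
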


Here, we must note that the second part of the results of Lemma \ref{lem:4}(ii) is not right.
In fact, the evaluation of $S_{\alpha}(a,b)$ is equal to $-(-1)^{\frac{m}{d}}2^{m+d}\chi_1\left(ax_{0}^{2^{\alpha}+1}\right)$ whether $\Tr_{d}(a)$ is equal to $0$ or not.
We will prove that in Lemma \ref{lem:7}.

For proving Lemma \ref{lem:7}, we need the following Lemma \ref{lem:5} and Lemma \ref{lem:6}.
\begin{lemma}\cite[Lemma 4.2]{5C98}\label{lem:5}
Denote by $\chi_1$ the canonical additive character of $\mathbb{F}_{q}$ with $q = p^{e}, p$ any prime.
Let $a\in \mathbb{F}_{q}$ be arbitrary and let $d$ be some integer dividing $e.$ Then
$$
\sum_{x\in\mathbb{F}_{p^{d}}}\chi_1\left(ax\right)=\left\{\begin{array}{ll}
 p^{d}, & \ \ \textrm{if\ } \Tr_{d}(a)= 0, \\
 0, & \ \ \textrm{otherwise.\ }
 \end{array}
 \right.
$$
\end{lemma}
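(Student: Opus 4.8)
The plan is to collapse the sum over the subfield $\F_{p^d}$ into an additive character sum of the additive group $(\F_{p^d},+)$ by means of the transitivity of the trace. Write $\zeta_p=\exp(2\pi i/p)$, so that $\chi_1(y)=\zeta_p^{\Tr(y)}$ with $\Tr=\Tr_{\F_q/\F_p}$, and let $\Tr_d$ denote the trace from $\F_q=\F_{p^e}$ onto $\F_{p^d}$. First I would fix $x\in\F_{p^d}$ and invoke transitivity of the trace to write $\Tr(ax)=\Tr_{\F_{p^d}/\F_p}\bigl(\Tr_d(ax)\bigr)$; since $x\in\F_{p^d}$ and $\Tr_d$ is $\F_{p^d}$-linear, $\Tr_d(ax)=x\,\Tr_d(a)$. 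Putting $c:=\Tr_d(a)\in\F_{p^d}$ and letting $\lambda(z)=\zeta_p^{\Tr_{\F_{p^d}/\F_p}(z)}$ be the canonical additive character of $\F_{p^d}$, this gives $\chi_1(ax)=\lambda(cx)$ for every $x\in\F_{p^d}$, hence
$$\sum_{x\in\F_{p^d}}\chi_1(ax)=\sum_{x\in\F_{p^d}}\lambda(cx).$$

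Next I would evaluate the right-hand side depending on whether $c=0$. If $\Tr_d(a)=c=0$, then every summand equals $\lambda(0)=1$ and the sum is $|\F_{p^d}|=p^{d}$. If $c\neq 0$, the map $x\mapsto cx$ is a bijection of $\F_{p^d}$, so $\sum_{x\in\F_{p^d}}\lambda(cx)=\sum_{y\in\F_{p^d}}\lambda(y)$. Since $\Tr_{\F_{p^d}/\F_p}$ maps onto $\F_p$, the character $\lambda$ is nontrivial, so one can choose $z_0\in\F_{p^d}$ with $\lambda(z_0)\neq 1$; re-indexing by $y\mapsto y+z_0$ yields $\lambda(z_0)\sum_{y}\lambda(y)=\sum_{y}\lambda(y)$, and therefore $\sum_{y\in\F_{p^d}}\lambda(y)=0$. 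Combining the two cases produces exactly the claimed dichotomy.

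This lemma is essentially routine and there is no genuinely hard step; the only point that warrants care is the first one, namely verifying that restricting $\chi_1$ to the subfield $\F_{p^d}$ and combining trace transitivity with the $\F_{p^d}$-linearity of $\Tr_d$ really does turn $\chi_1(ax)$ into the canonical additive character of $\F_{p^d}$ evaluated at $\Tr_d(a)\,x$. Once that identification is in place, the standard fact that a nontrivial additive character sums to zero over its group finishes the argument.
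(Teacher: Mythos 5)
Your proof is correct: the reduction of $\chi_1(ax)$ to the canonical additive character of $\F_{p^d}$ evaluated at $\Tr_d(a)\,x$ via trace transitivity and the $\F_{p^d}$-linearity of $\Tr_d$, followed by the orthogonality of a nontrivial character, is exactly the standard argument. The paper itself gives no proof of this lemma (it is quoted from Coulter's work), and the proof in that reference proceeds along the same lines as yours, so there is nothing to add.
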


\begin{lemma}\cite[Lemma 2.1]{26LL18}\label{lem:6}
Let $d=\gcd(e,\alpha).$ Then
$$
\gcd(2^{\alpha}+1,2^{e}-1)=\left\{\begin{array}{ll}
 1, & \ \ \textrm{if\ } \ \frac{e}{d} \ \ \textrm{is odd,\ } \\
 2^{d} + 1, & \ \ \textrm{if\ } \ \frac{e}{d}\ \ \textrm{is even.\ }
 \end{array}
 \right.
$$
\end{lemma}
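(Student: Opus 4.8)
The plan is to reduce the statement to the classical identity $\gcd(2^{i}-1,2^{j}-1)=2^{\gcd(i,j)}-1$ together with a short evaluation of $\gcd(2\alpha,e)$. Put $g=\gcd(2^{\alpha}+1,2^{e}-1)$. Two preliminary facts will carry both cases. Since $2^{\alpha}+1$ divides $2^{2\alpha}-1$, we have
$$g \mid \gcd(2^{2\alpha}-1,2^{e}-1)=2^{\gcd(2\alpha,e)}-1 .$$
Moreover, any common divisor of $2^{\alpha}+1$ and $2^{\alpha}-1$ divides $2$ and is odd, so $\gcd(2^{\alpha}+1,2^{\alpha}-1)=1$; as $d\mid\alpha$ gives $2^{d}-1\mid 2^{\alpha}-1$, this forces $\gcd(g,2^{d}-1)=1$.

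Next I would evaluate $\gcd(2\alpha,e)$. Writing $\gcd(2\alpha,e)=d\cdot\gcd\!\big(2(\alpha/d),e/d\big)$ and using $\gcd(\alpha/d,e/d)=1$: if $e/d$ is odd then $2$ is coprime to $e/d$, so the inner gcd equals $\gcd(\alpha/d,e/d)=1$ and $\gcd(2\alpha,e)=d$; if $e/d$ is even then $\alpha/d$ must be odd (otherwise $2\mid\gcd(\alpha/d,e/d)$), and a $2$-adic valuation count gives $\gcd\!\big(2(\alpha/d),e/d\big)=2$, hence $\gcd(2\alpha,e)=2d$.

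Now I assemble the two cases. If $e/d$ is odd, the first fact gives $g\mid 2^{d}-1$, which together with $\gcd(g,2^{d}-1)=1$ forces $g=1$. If $e/d$ is even, the first fact gives $g\mid 2^{2d}-1=(2^{d}-1)(2^{d}+1)$, and $\gcd(g,2^{d}-1)=1$ then yields $g\mid 2^{d}+1$. For the reverse divisibility, $\alpha/d$ odd gives $2^{d}+1\mid (2^{d})^{\alpha/d}+1=2^{\alpha}+1$ (from $x+1\mid x^{k}+1$ for odd $k$), while $2d\mid e$ gives $2^{d}+1\mid 2^{2d}-1\mid 2^{e}-1$; hence $2^{d}+1\mid g$, and therefore $g=2^{d}+1$.

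This is a routine number-theoretic computation with no serious obstacle; the only step that needs genuine care is the middle one, the evaluation of $\gcd(2\alpha,e)$, and in particular the observation that $e/d$ even forces $\alpha/d$ odd, which is exactly what pins down the extra factor $2$. One may also package the argument more compactly through the multiplicative identity $\gcd(2^{2\alpha}-1,2^{e}-1)=\gcd(2^{\alpha}-1,2^{e}-1)\cdot\gcd(2^{\alpha}+1,2^{e}-1)$ (valid because $2^{\alpha}-1$ and $2^{\alpha}+1$ are coprime), which immediately gives $g=(2^{\gcd(2\alpha,e)}-1)/(2^{d}-1)$, after which substituting the two values of $\gcd(2\alpha,e)$ finishes the proof at once.
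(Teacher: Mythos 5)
Your proof is correct, but there is nothing in the paper to compare it against: the lemma is imported verbatim from Coulter \cite[Lemma 2.1]{26LL18} and the paper gives no proof of it. Your argument is a complete and valid elementary derivation: the reduction $g \mid \gcd(2^{2\alpha}-1,2^{e}-1)=2^{\gcd(2\alpha,e)}-1$, the coprimality $\gcd(g,2^{d}-1)=1$ coming from $\gcd(2^{\alpha}+1,2^{\alpha}-1)=1$, and the evaluation $\gcd(2\alpha,e)=d$ or $2d$ according to the parity of $e/d$ all check out, as does the pivotal observation that $e/d$ even forces $\alpha/d$ odd, which simultaneously pins down $\gcd(2\alpha,e)=2d$ and yields the reverse divisibility $2^{d}+1\mid (2^{d})^{\alpha/d}+1=2^{\alpha}+1$. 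The closing remark, that $g=(2^{\gcd(2\alpha,e)}-1)/(2^{d}-1)$ via the multiplicative splitting of $\gcd(2^{2\alpha}-1,2^{e}-1)$ over the coprime factors $2^{\alpha}-1$ and $2^{\alpha}+1$, is a legitimate and tidy packaging of the same computation. This is essentially the standard proof of this classical fact, so had the paper included one, it would almost certainly have followed the same route.
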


The following lemma is in fact a simple correction of Lemma \ref{lem:4}(ii)(\cite[Theorem 5.3,(ii)]{26LL18}) and the idea of its proof comes from Carlitz \cite{5C98} and Coulter \cite{26LL18}.

\begin{lemma}\label{lem:7}
Let $b\in\F_{q}^{*} $ and suppose $e/d$ is even so that $e=2m $
for some integer $ m$. Let $ f(x)=a^{2^{\alpha}}x^{2^{2\alpha}}+ax \in \F_{q}[x]. $ If $a= g^{t(2^{d}+1)}$ for some integer $t$, then $S_{\alpha}(a,b)=0$ unless the equation
$f(x)=b^{2^{^{\alpha}}} $ is solvable. If this equation is solvable, with solution $ x_{0}$ say, then
$$
S_{\alpha}(a,b)=-(-1)^{\frac{m}{d}}2^{m+d}\chi_1\left(ax_{0}^{2^{\alpha}+1}\right).
$$
\end{lemma}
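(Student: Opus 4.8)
The plan is to follow the Carlitz--Coulter strategy of expressing $S_\alpha(a,b)$ in terms of $S_\alpha(a,0)$ together with an exponential sum over a shifted argument, and then to track carefully where the case distinction in the flawed Lemma~\ref{lem:4}(ii) crept in. First I would introduce the substitution $x \mapsto x + x_0$, where $x_0$ is a solution of $f(x) = b^{2^\alpha}$ (the case where this is unsolvable being handled exactly as in \cite{26LL18}). Expanding $a(x+x_0)^{2^\alpha+1} + b(x+x_0)$ and using additivity of the trace (characteristic $2$), the cross terms should collapse: the terms linear in $x$ will be $a x_0^{2^\alpha} x + a x x_0 + b x$, and after applying $\Tr$ and using that $\Tr(u) = \Tr(u^{2^\alpha})$ one rewrites the coefficient of the linear part as $\Tr\big((a^{2^\alpha} x_0^{2^{2\alpha}} + a x_0 + b^{2^\alpha})\, x^{?}\big)$ up to a Frobenius twist, which vanishes precisely because $f(x_0) = b^{2^\alpha}$. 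This yields the clean identity
\begin{equation}\label{eq:shift}
S_\alpha(a,b) = \chi_1\!\left(a x_0^{2^\alpha+1}\right) \sum_{x\in\F_q}\chi_1\!\left(a x^{2^\alpha+1}\right) = \chi_1\!\left(a x_0^{2^\alpha+1}\right) S_\alpha(a,0).
\end{equation}

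Next I would invoke Lemma~\ref{lem:3} to evaluate $S_\alpha(a,0)$. Since we are in the case $a = g^{t(2^d+1)}$, Lemma~\ref{lem:3} gives $S_\alpha(a,0) = -(-1)^{m/d} 2^{m+d}$ directly, with no dependence on $\Tr_d(a)$. Substituting into \eqref{eq:shift} gives
$$
S_\alpha(a,b) = -(-1)^{m/d} 2^{m+d}\, \chi_1\!\left(a x_0^{2^\alpha+1}\right),
$$
which is exactly the claimed formula. The point of the correction is that the original proof in \cite{26LL18} apparently split on whether $\Tr_d(a)=0$ at a stage where, in fact, Lemma~\ref{lem:3} already shows the value of $S_\alpha(a,0)$ is unconditional on $\F_q$; the spurious second branch in Lemma~\ref{lem:4}(ii) comes from conflating $S_\alpha(a,0)$ over $\F_q$ with an auxiliary sum over the subfield $\F_{2^d}$ (to which Lemma~\ref{lem:5} applies), and the latter is not what appears after the shift.

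The one genuinely delicate point, and the step I expect to be the main obstacle, is justifying \eqref{eq:shift} rigorously: one must verify that after the translation $x \mapsto x + x_0$ the ``linear in $x$'' contribution to $\Tr\big(a(x+x_0)^{2^\alpha+1} + b(x+x_0)\big)$ truly vanishes identically in $x$, not merely on average. This requires the identity $\Tr\big(a x_0^{2^\alpha} x\big) = \Tr\big(a^{2^{e-\alpha}} x_0 x^{2^{e-\alpha}}\big) = \Tr\big((a x_0^{2^\alpha})^{2^{e-\alpha}} \cdot x^{2^{e-\alpha}}\big)$ and then combining the two linear terms and one quadratic-turned-linear term into a single trace of $\big(f(x_0) + b^{2^\alpha}\big)^{2^{e-2\alpha}} x$ or similar — i.e.\ bookkeeping of Frobenius exponents modulo $e$. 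Here one uses $\gcd(2^\alpha+1, 2^e-1) = 2^d+1$ from Lemma~\ref{lem:6} together with $a = g^{t(2^d+1)}$ to ensure $f$ has the expected kernel structure, matching the hypothesis of Lemma~\ref{lem:4}(ii) that $f(x)=b^{2^\alpha}$ is solvable. Once the linear part is shown to cancel, the remainder of the argument is the routine substitution and an appeal to Lemma~\ref{lem:3}, so essentially all the content is in this cancellation and in correctly reading off $S_\alpha(a,0)$ without an extraneous case split.
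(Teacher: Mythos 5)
Your proof is correct, but it takes a genuinely different route from the paper. You translate $x\mapsto x+x_0$ inside the single sum $S_\alpha(a,b)$ and show the linear part cancels identically: writing the cross terms as $ax^{2^\alpha}x_0+ax_0^{2^\alpha}x+bx$ (note your ``$axx_0$'' should be $ax^{2^\alpha}x_0$) and applying $\Tr(z)=\Tr(z^{2^\alpha})$ to the last two, the coefficient of $x^{2^\alpha}$ becomes $f(x_0)+b^{2^\alpha}=0$, so the cancellation is exact, not just on average; this is the ``delicate point'' you flagged, and it does go through. One small gap: after the shift the constant factor is $\chi_1\left(ax_0^{2^\alpha+1}+bx_0\right)$, not $\chi_1\left(ax_0^{2^\alpha+1}\right)$, so to reach your displayed identity you still need $\Tr\left(ax_0^{2^\alpha+1}+bx_0\right)=\Tr\left(ax_0^{2^\alpha+1}\right)$, which follows from $f(x_0)=b^{2^\alpha}$ by the same Frobenius manipulation (the paper proves exactly this identity). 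The paper instead follows Coulter's original scheme: it computes the product $S_\alpha(a,b)S_\alpha(a,0)$ as a double sum, parametrizes the $2^{2d}$ solutions of $f(x)=b^{2^\alpha}$ as $x_0+\beta c$ with $c\in\F_{2^{2d}}$, evaluates $\sum_{c}\chi_1\left(a\beta^{2^\alpha+1}c^{2^\alpha+1}\right)=2^{2d}$ via the norm map and Lemmas \ref{lem:5}--\ref{lem:6} (the correction to \cite{26LL18} lives precisely here: $a\beta^{2^\alpha+1}\in\F_{2^d}$ forces $\Tr_d(a\beta^{2^\alpha+1})=0$ since $e/d$ is even, independently of $\Tr_d(a)$ -- this is where the spurious case split arose, not quite where you guessed), and then divides by $S_\alpha(a,0)$. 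Your shift-and-factor argument is shorter and bypasses the solution-set structure, the norm map, and Lemmas \ref{lem:5} and \ref{lem:6} entirely, at the cost of deferring the unsolvable case (where $S_\alpha(a,b)=0$) to the reference, whereas the paper's product computation handles both cases uniformly. Your final appeal to Lemma \ref{lem:3} for the unconditional value $S_\alpha(a,0)=-(-1)^{m/d}2^{m+d}$ is exactly right and matches the paper's last step.
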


\begin{proof}
As the proof of Theorem 5.3 in \cite{26LL18}, we have
$$
S_{\alpha}(a,b)S_{\alpha}(a,0)=\sum_{x\in \F_{q}}\Big(\chi_1(ax^{2^{\alpha}+1}+bx)\sum_{y\in \F_{q}}\chi_1\big(y^{2^{\alpha}}(f(x)+b^{2^{\alpha}})\big)\Big).
$$

Note that, if $x \in \F_{q}$ is not a solution of $f(x)=b^{2^{\alpha}}$, then the inner sum $\sum_{y\in \F_{q}}\chi_1\big(y^{2^{\alpha}}(f(x)+b^{2^{\alpha}})\big) = 0$.
If $x \in \F_{q}$ is a solution of $f(x)=b^{2^{\alpha}}$, then the inner sum $\sum_{y\in \F_{q}}\chi_1\big(y^{2^{\alpha}}(f(x)+b^{2^{\alpha}})\big) = q$.

Thus, if the equation $f(x)=b^{2^{\alpha}}$ is not solvable, then $S_{\alpha}(a,b)=0$.
If the equation $f(x)=b^{2^{\alpha}}$ is solvable, then, by Theorem 3.1 in \cite{26LL18}, we know that $f(x)=b^{2^{\alpha}}$ has $2^{2d}$ solutions in $\F_{q}$.
Denote by $x_{0}$ a fixed solution of $f(x) = b^{2^{\alpha}} $ and $\beta$ a fixed nonzero solution of $f(x) =0$,
then, the solutions of $f(x)=b^{2^{\alpha}}$ can be given by $x = x_{0} + \beta c$ with $c\in\F_{2^{2d}}$.
In this case, $\F_q $ is the disjoint union of
$\{x\in\F_q| \textrm{$x$ is a solution of the equation $f(x)=b^{2^{\alpha}}$} \}$ and $\{x\in\F_q| \textrm{$x$ is not solution of the equation $f(x)=b^{2^{\alpha}}$} \}$, so, we have
\begin{align*}
&S_{\alpha}(a,b)S_{\alpha}(a,0)=q\sum_{c\in\F_{2^{2d}}}\chi_1\left(a(x_{0} + \beta c)^{2^{\alpha}+1}+b(x_{0} + \beta c)\right)  \\
&=q\sum_{c\in\F_{2^{2d}}}\chi_1\left(ax_{0}^{2^{\alpha}+1}+bx_{0} + a(\beta c)^{2^{\alpha}+1} + a\beta cx_{0}^{2^{\alpha}}+ax_{0}(\beta c)^{2^{\alpha}}+b\beta c)\right)  \\
&=q\sum_{c\in\F_{2^{2d}}}\chi_1\left(ax_{0}^{2^{\alpha}+1}+bx_{0} + a(\beta c)^{2^{\alpha}+1} + (\beta c)^{2^{\alpha}}(a^{2^{\alpha}}x_{0}^{2^{2\alpha}}+ax_{0}+b^{2^{\alpha}}))\right)  \\
&=q\sum_{c\in\F_{2^{2d}}}\chi_1\left(ax_{0}^{2^{\alpha}+1}+bx_{0})\right)\chi_1\left(a(\beta c)^{2^{\alpha}+1} )\right),
\end{align*}
where we use the fact that $a^{2^{\alpha}}x_{0}^{2^{2\alpha}}+ax_{0}+b^{2^{\alpha}} = 0$ and $\chi_1$ is a additive group homomorphism.

Note that $f(x_{0}) = b^{2^{\alpha}}$ and $\Tr(z) = \Tr(z^{2^l})$ for any $z\in \F_q$ and non-negative integer $l$, we get
\begin{align*}
\mathrm{Tr}\left(ax_{0}^{2^{\alpha}+1}+bx_{0}\right)
&=\mathrm{Tr}\left(a^{2^{\alpha}}x_{0}^{2^{\alpha}}x_{0}^{2^{2\alpha}}+b^{2^{\alpha}}x_{0}^{2^{\alpha}}\right)  \\
&=\mathrm{Tr}\left(x_{0}^{2^{\alpha}}(ax_{0}+b^{2^{\alpha}})+b^{2^{\alpha}}x_{0}^{2^{\alpha}}\right)  \\
&=\mathrm{Tr}\left(ax_{0}^{2^{\alpha}+1}\right),
\end{align*}
which follows that
$$
S_{\alpha}(a,b)S_{\alpha}(a,0)=q\chi_1\left(ax_{0}^{2^{\alpha}+1}\right)\sum_{c\in\mathbb{F}_{2^{2d}}}\chi_1\left(a\beta^{2^{\alpha}+1}c^{2^{\alpha}+1}\right).
$$

By Lemma \ref{lem:6}, we know that $\gcd(2^{\alpha}+1, 2^{2d}-1)=2^{d}+1$. So, the map $N: \F_{2^{2d}}^{\ast}\rightarrow \F_{2^{d}}^{\ast}$ defined by $N(x)=x^{2^{\alpha}+1}$
is a surjective homomorphism. In fact, let $g_1$ be a generator of $\F_{2^{2d}}^{\ast}$, then
$$\ord(g_1^{2^{\alpha}+1})=\frac{2^{2d}-1}{\gcd(2^{\alpha}+1, 2^{2d}-1)} = 2^d-1,$$
where $\ord(g_1^{2^{\alpha}+1})$ is the order of $g_1^{2^{\alpha}+1}$, and thus, $g_1^{2^{\alpha}+1}$ is a generator of $\F_{2^{d}}^{\ast}$.

Now, we get
\begin{align*}
\sum_{c\in\mathbb{F}_{2^{2d}}}\chi_1\left(a\beta^{2^{\alpha}+1}c^{2^{\alpha}+1}\right) &= 1 + \sum_{c\in\F_{2^{2d}}^*}\chi_1\left(a\beta^{2^{\alpha}+1}c^{2^{\alpha}+1}\right)\\
&=1+(2^{d}+1)\sum_{y\in\mathbb{F}_{2^{d}}^{\ast}}\chi_1\left(a\beta^{2^{\alpha}+1}y\right).
\end{align*}

Since $f(\beta)=0,$ we have $(a\beta^{2^{\alpha}+1})^{2^{\alpha}-1}=1.$ Combining with
$$\gcd(2^{\alpha}-1, 2^{e}-1)=2^{\gcd(e,\alpha)}-1=2^{d}-1,$$
we can conclude that $(a\beta^{2^{\alpha}+1})^{2^{d}-1}=1,$ which means $a\beta^{2^{\alpha}+1} \in \mathbb{F}_{2^{d}}.$
Hence $\mathrm{Tr}_{d}(a\beta^{2^{\alpha}+1})= 0$ (using $e/d$ even). By Lemma 5,
$$
\sum_{y\in\mathbb{F}_{2^{d}}^{\ast}}\chi_1\left(a\beta^{2^{\alpha}+1}y\right)=2^d-1,$$
which follows that $$\sum_{c\in\mathbb{F}_{2^{2d}}}\chi_1\left(a\beta^{2^{\alpha}+1}c^{2^{\alpha}+1}\right)=2^{2d}.
$$
So, we have
$
S_{\alpha}(a,b)S_{\alpha}(a,0)=q\chi_1\left(ax_{0}^{2^{\alpha}+1}\right)2^{2d}.
$
Divided by $S_{\alpha}(a,0), $ the claimed result follows. The proof is finished. \end{proof}

Two examples are given below to test the correctness of Lemma \ref{lem:7}.

\begin{example}
Let $(\mathbb{F}_{q},\alpha,a,b)=(\mathbb{F}_{2^{6}},1,g^{3},g^{3}+g^{33}).$ It is easy to see $f(1)=b^{2^{\alpha}}.$ By Magma, we have $\mathrm{Tr}_{1}(a)=1\neq 0$ and
$$S_{\alpha}(a,b)=\sum_{x\in \mathbb{F}_{q}}\chi_1\left(g^{3}x^{3}+(g^{3}+g^{33})x\right)=-16.$$
\end{example}

\begin{example}
Let $(\mathbb{F}_{q},\alpha,a,b)=(\mathbb{F}_{2^{6}},1,g^{9},g^{9}+g^{36}).$ It is easy to see $f(1)=b^{2^{\alpha}}.$ By Magma, we have $\mathrm{Tr}_{1}(a)=0$ and
$$S_{\alpha}(a,b)=\sum_{x\in \mathbb{F}_{q}}\chi_1\left(g^{9}x^{3}+(g^{9}+g^{36})x\right)=16.$$
\end{example}

\section{Weight distributions of binary linear codes $\C_D$}

In this section, we study the weight distribution of linear code $\C_D$ in \eqref{defcode1}.

Recall that $h$ is a proper divisor of $e$ and $D=D_{(a,b)}$ in \eqref{def-set} with $a\in \F_{q}^{\ast}, \ b\in \F_{q}$. Write $S(a,b) = S_h(a,b)$ for short.
Now we first determine the length of the codes $\C_D$.

\begin{lemma}\label{lem:8} Let $a\in \F_{q}^{\ast}, \ b\in \F_{q}$.  Then, $$
n=|D|=\left\{\begin{array}{ll}
\frac{1}{2}q^{2}+\frac{1}{2}qS(a,0)-1, & \textrm{if\ } \ b=0, \\
\frac{1}{2}q^{2}-1, & \textrm{if\ } \ b\neq0.
\end{array}
\right. $$

\end{lemma}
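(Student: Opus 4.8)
The plan is to count $|D|$ by translating the defining condition into a character sum. By definition,
$$
n = |D| = \#\{(x,y)\in\F_q^2\setminus\{\mathbf 0\} : \Tr(ax^{2^h+1}+by)=0\}.
$$
Using the standard indicator identity $\tfrac12\sum_{u\in\F_2}(-1)^{u\cdot t} = \tfrac12\bigl(1+(-1)^t\bigr)$, which equals $1$ if $t=0$ and $0$ otherwise, I would write
$$
n = -1 + \frac{1}{2}\sum_{(x,y)\in\F_q^2}\sum_{u\in\F_2}(-1)^{u\Tr(ax^{2^h+1}+by)}
  = -1 + \frac{1}{2}\sum_{(x,y)\in\F_q^2}\Bigl(1 + \chi_1(ax^{2^h+1}+by)\Bigr),
$$
the $-1$ accounting for the excluded point $\mathbf 0$ (note $\Tr(a\cdot 0+b\cdot 0)=0$, so $\mathbf 0$ satisfies the trace condition and must be removed). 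The first term contributes $\tfrac12 q^2$, so it remains to evaluate
$$
T := \sum_{(x,y)\in\F_q^2}\chi_1(ax^{2^h+1}+by) = \Bigl(\sum_{x\in\F_q}\chi_1(ax^{2^h+1}+0\cdot x)\Bigr)\cdot\Bigl(\sum_{y\in\F_q}\chi_1(by)\Bigr),
$$
since the exponent separates and $\chi_1$ is multiplicative on sums. Wait — the $x$-sum here is $S_h(a,0)$ only if there is no linear term in $x$, which is indeed the case, so the $x$-sum is exactly $S(a,0)=S_h(a,0)$.

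Now I split on $b$. If $b\neq 0$, the inner $y$-sum $\sum_{y\in\F_q}\chi_1(by)$ is a complete character sum over $\F_q$ of the nontrivial additive character $y\mapsto\chi_1(by)$, hence equals $0$; therefore $T=0$ and $n = \tfrac12 q^2 - 1$. If $b=0$, the $y$-sum is $\sum_{y\in\F_q}1 = q$, so $T = q\cdot S(a,0)$, giving $n = \tfrac12 q^2 + \tfrac12 q\, S(a,0) - 1$. This matches the two cases in the statement.

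There is essentially no hard step here: the whole argument is the orthogonality of additive characters plus the factorization of the double sum. The only point requiring a moment's care is bookkeeping around the zero vector — confirming that $\mathbf 0\in\{(x,y):\Tr(ax^{2^h+1}+by)=0\}$ so that the correction term is exactly $-1$ — and noting that the $x$-exponential-sum appearing is genuinely $S_h(a,0)$ in the notation of Section~2 (no linear term, $b$-contribution fully carried by the $y$-variable). I would remark in passing that, via Lemma~\ref{lem:1} and Lemma~\ref{lem:3}, $S(a,0)$ can be made fully explicit depending on whether $e/d$ (with $d=\gcd(e,h)$) is odd or even, which is what will be needed in Section~3; but for the present lemma it suffices to leave the answer in terms of $S(a,0)$.
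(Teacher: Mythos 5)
Your proposal is correct and follows essentially the same route as the paper: the orthogonality identity over $\F_2$ to detect $\Tr(ax^{2^h+1}+by)=0$, the $-1$ correction for the excluded zero vector, factorization of the double character sum into the $x$-sum $S(a,0)$ times the $y$-sum $\sum_{y}\chi_1(by)$, and the case split on whether $b$ vanishes. No gaps.
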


\begin{proof} By definition, we have
\begin{align*}
|D_{(a,b)}|
&=\frac{1}{2}\Big(\sum_{x,y\in \F_{q}}\sum_{z\in \F_{2}}(-1)^{\Tr(z(ax^{2^{h}+1}+by))}\Big) - 1  \\
&=\frac{1}{2}\sum_{x,y\in \F_{q}}\Big(1+(-1)^{\Tr(ax^{2^{h}+1}+by)}\Big) - 1  \\
&=\frac{1}{2}q^{2}+\frac{1}{2}\sum_{x,y\in \F_{q}}(-1)^{\Tr(ax^{2^{h}+1}+by)} - 1  \\
&=\frac{1}{2}q^{2}+\frac{1}{2}\sum_{y\in \F_{q}}(-1)^{\Tr(by)}\sum_{x\in \F_{q}}(-1)^{\Tr(ax^{2^{h}+1})} - 1.
\end{align*}

If $b=0$, then
\begin{align*}
|D_{(a,0)}|
&=\frac{1}{2}q^{2}+\frac{1}{2}q\sum_{x\in \F_{q}}(-1)^{\Tr(ax^{2^{h}+1})} - 1  \\
&=\frac{1}{2}q^{2}+\frac{1}{2}qS(a,0) - 1 .
\end{align*}

If $b\neq0$, then $\sum_{y\in \F_{q}}(-1)^{\Tr(by)}=0$.  So,
$|D_{(a,b)}|=\frac{1}{2}q^{2}-1$.

 We complete the proof. \end{proof}

Let $c_{(u,v)}$ be the corresponding codeword in $\C_{D}$ with $(u,v)\in \F_q^2$, that is,
$$
c_{(u,v)}=\Big(\Tr(ux+vy)\Big)_{(x,y)\in D_{(a,b)}}.
$$
Obviously, $c_{(0,0)} = 0$ and $\wt(c_{(0,0)}) = 0$.
Next, we determine the Hamming weight $\wt(c_{(u,v)})$ with $(u,v)\neq \mathbf{0}$ in the following proposition.

\begin{proposition} \label{prop:1}
Let $(u,v)(\neq \mathbf{0})\in\F_{q}^{2}$. We have
\begin{enumerate}
\item[(i)] If $ b = 0 $, then
$$
\wt(c_{(u,v)})=\left\{\begin{array}{ll}
\frac{1}{4}q\Big(q+S(a,0)\Big), \  & \textrm{if\ } \ v\neq0, \\
\frac{1}{4}q\Big(q+S(a,0)-S(a,u)\Big), \  & \textrm{if\ } \ \ v=0.
\end{array}
\right.
$$
\item[(ii)] If $ b\neq 0$, then
$$
\wt(c_{(u,v)})=\left\{\begin{array}{ll}
\frac{1}{4}q^{2}, \  & \textrm{if\ } \  v=0 \ \textrm{or\ } \ v\neq b,\ v\neq 0,\\
\frac{1}{4}q\Big(q-S(a,u)\Big), \  & \textrm{if\ } \ v=b.
\end{array}
\right.
$$
\end{enumerate} \end{proposition}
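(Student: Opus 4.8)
The plan is to turn the weight of $c_{(u,v)}$ into an exponential sum over the defining set $D=D_{(a,b)}$ and then reduce it to the sum $S(a,u)=S_h(a,u)$ already at our disposal. Since each coordinate $\Tr(ux+vy)$ lies in $\F_2$, for $(u,v)\neq\mathbf{0}$ we have
$$
\wt(c_{(u,v)}) = \big|\{(x,y)\in D:\Tr(ux+vy)=1\}\big| = \frac{n}{2}-\frac{1}{2}\sum_{(x,y)\in D}(-1)^{\Tr(ux+vy)},
$$
so the whole problem is to evaluate $T:=\sum_{(x,y)\in D}(-1)^{\Tr(ux+vy)}$.

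First I would rewrite $T$ as a sum over all of $\F_q^2$. Because $(0,0)\notin D$ but would contribute $1$, and because $\frac{1}{2}\sum_{z\in\F_2}(-1)^{\Tr(zw)}$ equals $1$ when $\Tr(w)=0$ and $0$ otherwise (the same device used in the proof of Lemma~\ref{lem:8}),
$$
T=-1+\frac{1}{2}\sum_{x,y\in\F_q}(-1)^{\Tr(ux+vy)}\Big(1+(-1)^{\Tr(ax^{2^{h}+1}+by)}\Big).
$$
The term coming from the $1$ factors as $\frac{1}{2}\big(\sum_{x}(-1)^{\Tr(ux)}\big)\big(\sum_{y}(-1)^{\Tr(vy)}\big)$, which is $0$ since $(u,v)\neq\mathbf{0}$. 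In the remaining term I would collect the $y$-dependence, using $\Tr(ux+vy+ax^{2^{h}+1}+by)=\Tr(ax^{2^{h}+1}+ux)+\Tr((v+b)y)$, so that it factors as $\frac{1}{2}S(a,u)\sum_{y\in\F_q}(-1)^{\Tr((v+b)y)}$; the inner $y$-sum equals $q$ if $v=b$ and $0$ otherwise. Hence $T=-1$ when $v\neq b$, and $T=\frac{1}{2}qS(a,u)-1$ when $v=b$.

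Plugging these into the displayed formula for $\wt(c_{(u,v)})$ gives $\wt(c_{(u,v)})=\frac{n+1}{2}$ when $v\neq b$ and $\wt(c_{(u,v)})=\frac{n+1}{2}-\frac{1}{4}qS(a,u)$ when $v=b$. It then remains to substitute the value of $n$ from Lemma~\ref{lem:8}, separating the cases $b=0$ and $b\neq0$: when $b\neq0$ one has $\frac{n+1}{2}=\frac{1}{4}q^2$, and when $b=0$ one has $\frac{n+1}{2}=\frac{1}{4}q(q+S(a,0))$ while the condition $v=b$ degenerates to $v=0$ (which then forces $u\neq0$). A short simplification in each of the four resulting cases reproduces exactly the expressions in the statement.

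There is no genuine obstacle in this argument; it is a routine character-sum computation. The only places that need a little care are the bookkeeping of the excluded point $(0,0)$ (the source of the $-1$ in $T$), the splitting of the exponential over $x$ and $y$ so that the $x$-part is recognised as $S(a,u)$, and checking that the degenerate subcase $b=0$, $v=0$ is correctly absorbed into the general calculation rather than treated as an exception.
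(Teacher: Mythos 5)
Your proof is correct and follows essentially the same route as the paper: the paper counts $N(u,v)=\{(x,y)\in\F_q^2:\Tr(ax^{2^h+1}+by)=0,\ \Tr(ux+vy)=0\}$ with the orthogonality trick and uses $\wt(c_{(u,v)})=n-|N(u,v)|+1$, while you compute the equivalent signed sum $T=\sum_{(x,y)\in D}(-1)^{\Tr(ux+vy)}=n-2\,\wt(c_{(u,v)})$; both reduce to the same factorization into $S(a,u)$ times a $y$-character sum that is $q$ exactly when $v=b$. All four resulting cases match the stated formulas, and the bookkeeping of the excluded point $(0,0)$ is handled correctly.
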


\begin{proof} Put $N(u,v)=\{(x,y)\in \F_{q}^{2}: \Tr(ax^{2^{h}+1}+by)=0, \Tr(ux+vy) = 0\}$, then
\begin{align*}
|N(u,v)|
&=\frac{1}{4}\sum_{x,y\in \F_{q}}\Big(\sum_{z_{1}\in \F_{2}}(-1)^{\Tr(z_{1}(ax^{2^{h}+1}+by))}\sum_{z_{2}\in \F_{2}}(-1)^{\Tr(z_{2}(ux+vy))}\Big)  \\
&=\frac{1}{4}\sum_{x,y\in \F_{q}}\Big(\big(1+(-1)^{\Tr(ax^{2^{h}+1}+by)}\big)\big(1+(-1)^{\Tr(ux+vy)}\big)\Big)  \\
&=\frac{1}{4}\Big(q^{2}+\sum_{x,y\in \F_{q}}(-1)^{\Tr(ax^{2^{h}+1}+by)}+\sum_{x,y\in \F_{q}}(-1)^{\Tr(ax^{2^{h}+1}+by+ux+vy)}\Big),
\end{align*}
where $\sum_{x,y\in \F_{q}}(-1)^{\Tr(ux+vy)} =\sum_{x\in \F_{q}}(-1)^{\Tr(ux)}\sum_{y\in \F_{q}}(-1)^{\Tr(vy)} = 0 $.

If $b=0,$ then
\begin{align*}
|N(u,v)|
&=\frac{1}{4}\Big(q^{2}+\sum_{x,y\in \F_{q}}(-1)^{\Tr(ax^{2^{h}+1})}+\sum_{x,y\in \F_{q}}(-1)^{\Tr(ax^{2^{h}+1}+ux+vy)}\Big)  \\
&=\frac{1}{4}\Big(q^{2}+q\sum_{x\in \F_{q}}(-1)^{\Tr(ax^{2^{h}+1})}+\sum_{y\in \F_{q}}(-1)^{\Tr(vy)}\sum_{x\in \F_{q}}(-1)^{\Tr(ax^{2^{h}+1}+ux)}\Big)  \\
&=\frac{1}{4}\Big(q^{2}+qS(a,0)+S(a,u)\sum_{y\in \F_{q}}(-1)^{\Tr(vy)}\Big).
\end{align*}
So, we have
$$
|N(u,v)|=\left\{\begin{array}{ll}
\frac{1}{4}q\Big(q+S(a,0)\Big), \  & \textrm{if\ } \ v\neq0, \\
\frac{1}{4}q\Big(q+S(a,0)+S(a,u)\Big), \  & \textrm{if\ } \  v=0.
\end{array}
\right.
$$

Noting that $\wt(c_{(u,v)})=n-|N(u,v)|+1$.
By Lemma \ref{lem:8}, we have
$$
\wt(c_{(u,v)})=\left\{\begin{array}{ll}
\frac{1}{4}q\Big(q+S(a,0)\Big), \  & \textrm{if\ } \ v\neq0, \\
\frac{1}{4}q\Big(q+S(a,0)-S(a,u)\Big), \  & \textrm{if\ } \ \ v=0.
\end{array}
\right.
$$

If $b\neq0,$ then $$\sum_{x,y\in \F_{q}}(-1)^{\Tr(ax^{2^{h}+1}+by)} =\sum_{x\in \F_{q}}(-1)^{\Tr(ax^{2^{h}+1})}\sum_{y\in \F_{q}}(-1)^{\Tr(by)} = 0 ,$$
which follows that
\begin{align*}
|N(u,v)|
&=\frac{1}{4}q^{2}+\frac{1}{4}\sum_{x,y\in \F_{q}}(-1)^{\Tr(ax^{2^{h}+1}+by+ux+vy)}  \\
&=\frac{1}{4}q^{2}+\frac{1}{4}\sum_{y\in \F_{q}}(-1)^{\Tr((b+v)y)}\sum_{x\in \F_{q}}(-1)^{\Tr(ax^{2^h+1}+ux)}  \\
&=\frac{1}{4}q^{2}+\frac{1}{4}S(a,u)\sum_{y\in \F_{q}}(-1)^{\Tr((b+v)y)}.
\end{align*}

So, we have
$$
|N(u,v)|=\left\{\begin{array}{ll}
\frac{1}{4}q^{2}, \  & \textrm{if\ }  v=0, \\
\frac{1}{4}q\Big(q+S(a,u)\Big), \  & \textrm{if\ } \ v=b, \\
\frac{1}{4}q^{2}, \  & \textrm{if\ } \ v\neq b,\ v\neq0.
\end{array}
\right.
$$
By Lemma \ref{lem:8} again, we have
$$
\wt(c_{(u,v)})=\left\{\begin{array}{ll}
\frac{1}{4}q^{2}, \  & \textrm{if\ } \ v=0, \\
\frac{1}{4}q\Big(q-S(a,u)\Big), \  & \textrm{if\ } \ v=b, \\
\frac{1}{4}q^{2}, \  & \textrm{if\ } \ v\neq b,\ v\neq0.
\end{array}
\right.
$$

For the above, the proof is finished.
\end{proof}

\begin{remark}
By Proposition \ref{prop:1}, we know that, for $(u,v)(\neq \mathbf{0})\in\F_{q}^{2}$, we have $\wt(c_{(u,v)})>0$. So, the map: $\F_q^2\rightarrow\C_D$ defined by $(u,v)\mapsto c_{(u,v)} $
is an isomorphism as linear spaces over $\F_2$. Hence, the dimension of the codes $\C_D$ in \eqref{defcode1} is equal to $2e$.
\end{remark}

\begin{remark}\label{rem:2}
By the definition of $D$ in \eqref{def-set} and $\C_D$ in \eqref{defcode1}, we prove that the minimal distance of the dual code $C_{D}^{\perp}$ is at least $2$.
If not, then there exists a coordinate $i$ such that the $i$-th entry of all of the codewords of $\C_D$ is $0$,
that is, $\Tr(\mathbf{x}\cdot \mathbf{d}_{i})=0$ for all $\mathbf{x}\in \F_{q}^{2}$, where $\mathbf{d}_{i}\in D$.
Thus, by the properties of the trace function, we have $\mathbf{d}_{i}=0$. It contradicts with $\mathbf{d}_i \neq 0$.
\end{remark}

In the following sequel,  we will give the weight distribution of linear codes $\C_D$ in \eqref{defcode1} case by case on the term of the parity of $e/h$.

\begin{theorem}\label{thm:1}
Let $e/h$ be odd. The code $\C_{D_{(a,b)}}$ is a $[\frac{1}{2}q^{2}-1,2e,\delta]$ binary linear code
with the weight distribution in Table 1, where $\delta=\frac{1}{4}q(q-2^{\frac{e+h}{2}})$.
\begin{table}[ht]\label{tab:1}
\centering
 \caption{The weight distribution of the codes $\C_{D_{(a,b)}}$}
\begin{tabular}{|c|c|}
\hline
\textrm{Weight} $\omega$ \qquad& \textrm{Multiplicity} $A_\omega$   \\
\hline
0 \qquad&   1  \\
\hline
$\frac{1}{4}q^{2}$ \qquad&  $q^{2}-2^{e-h}-1$  \\
\hline
$\frac{1}{4}q(q-2^{\frac{e+h}{2}})$  \qquad& $2^{e-h-1}+2^{\frac{e-h-2}{2}}$  \\
\hline
$\frac{1}{4}q(q+2^{\frac{e+h}{2}})$  \qquad& $2^{e-h-1}-2^{\frac{e-h-2}{2}}$  \\
\hline
\end{tabular}
\end{table}
\end{theorem}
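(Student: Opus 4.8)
The plan is to substitute the already-established length formula (Lemma~\ref{lem:8}) and weight formula (Proposition~\ref{prop:1}) into the evaluation of the exponential sum $S(a,u)$ supplied by Lemmas~\ref{lem:1} and~\ref{lem:2}, and then count. First I would observe that, since $h$ is a proper divisor of $e$, we have $d=\gcd(e,h)=h$, so the hypothesis that $e/h$ is odd is exactly the hypothesis ``$e/d$ odd'' of Lemmas~\ref{lem:1} and~\ref{lem:2}; in particular $e-h$ is even and $e-h\geq 2$. By Lemma~\ref{lem:1}, $S(a,0)=S_h(a,0)=0$. Plugging this into Lemma~\ref{lem:8} gives $n=\frac{1}{2}q^{2}-1$ whether or not $b=0$, and plugging it into Proposition~\ref{prop:1} collapses the weight of every codeword $c_{(u,v)}$ with $(u,v)\neq\mathbf 0$ to $\frac{1}{4}q^{2}$, except on a single special line --- namely $v=0$ when $b=0$, and $v=b$ when $b\neq 0$ --- where the weight equals $\frac{1}{4}q\bigl(q-S(a,u)\bigr)$.

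The next step is to find the distribution of $S(a,u)$ along that special line. For $u\neq 0$, Lemma~\ref{lem:2} gives $S(a,u)=S_h(1,uc^{-1})$, where $c\in\F_q^{*}$ is the unique element with $c^{2^{h}+1}=a$; for $u=0$, $S(a,0)=0$ by Lemma~\ref{lem:1}. Since $u\mapsto uc^{-1}$ is a bijection of $\F_q$, the problem reduces to the distribution of $S_h(1,b')$ as $b'$ runs over $\F_q$: by Lemma~\ref{lem:2} it is $0$ unless $\Tr_h(b')=1$, and it is $\pm 2^{(e+h)/2}$ on that fibre, which contains exactly $2^{e-h}$ elements, none of them $0$ (since $\Tr_h(0)=0$). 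Write $N_{\pm}$ for the number of $b'\in\F_q$ with $S_h(1,b')=\pm 2^{(e+h)/2}$, so $N_{+}+N_{-}=2^{e-h}$. To split this I would compute the first moment by interchanging the order of summation:
\[
\sum_{b'\in\F_q}S_h(1,b')=\sum_{x\in\F_q}\chi_1(x^{2^{h}+1})\sum_{b'\in\F_q}\chi_1(b'x)=q,
\]
since only $x=0$ survives the inner sum; as $S_h(1,0)=0$, this gives $(N_{+}-N_{-})2^{(e+h)/2}=q$, i.e.\ $N_{+}-N_{-}=2^{(e-h)/2}$, and hence $N_{\pm}=2^{e-h-1}\pm 2^{(e-h-2)/2}$.

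Finally I would translate these counts back into codeword multiplicities, being careful about the exclusion of $(0,0)$ and about the range of $u$ on the special line, which is $\F_q^{*}$ when $b=0$ but all of $\F_q$ when $b\neq 0$ (because then $(0,b)\neq\mathbf 0$). If $b=0$, the special line carries $q-1$ codewords: $N_{+}$ of weight $\frac{1}{4}q(q-2^{(e+h)/2})$, $N_{-}$ of weight $\frac{1}{4}q(q+2^{(e+h)/2})$, and $q-1-2^{e-h}$ of weight $\frac{1}{4}q^{2}$; the remaining $q(q-1)$ nonzero codewords all have weight $\frac{1}{4}q^{2}$. If $b\neq 0$, the special line carries $q$ codewords with the same $N_{\pm}$ split and $q-2^{e-h}$ of weight $\frac{1}{4}q^{2}$, and every other nonzero codeword has weight $\frac{1}{4}q^{2}$. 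In both cases the multiplicity of weight $\frac{1}{4}q^{2}$ comes out to $q^{2}-2^{e-h}-1$, the three multiplicities sum to $2^{2e}-1$ as a sanity check, the dimension is $2e$ by the first Remark following Proposition~\ref{prop:1}, and $\delta=\frac{1}{4}q(q-2^{(e+h)/2})$ is plainly the smallest positive weight. This reproduces Table~1.

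The only genuinely non-formal step is the first-moment identity used to separate $N_{+}$ from $N_{-}$ --- a routine character-orthogonality argument, but the one place where something beyond substitution happens. The remaining difficulty is purely clerical: handling the origin and the two different $u$-ranges carefully enough that the weight-$\frac{1}{4}q^{2}$ multiplicity agrees in the $b=0$ and $b\neq 0$ cases.
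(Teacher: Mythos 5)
Your proof is correct, and it reaches Table~1 by a slightly different route than the paper. Both arguments share the same skeleton: set $d=\gcd(e,h)=h$, use Lemma~\ref{lem:1} to get $S(a,0)=0$, feed this into Lemma~\ref{lem:8} and Proposition~\ref{prop:1} to see that all codewords off a single special line ($v=0$ or $v=b$) have weight $\frac14 q^2$, and use Lemma~\ref{lem:2} to identify the three possible weights on that line. Where you diverge is in splitting the two non-central weights: the paper computes $A_{\omega_1}$ explicitly (via the fibre count $|\{u:\Tr_h(uc^{-1})\neq 1\}|$) and then invokes the first two Pless power moments, whose validity rests on Remark~2 (dual distance at least $2$), to solve a $2\times 2$ linear system for $A_{\omega_2},A_{\omega_3}$. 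You instead determine the sign distribution $N_\pm$ of $S_h(1,b')$ directly from the first-moment identity $\sum_{b'}S_h(1,b')=q$, obtained by swapping the order of summation and applying character orthogonality; the multiplicities then fall out by pure bookkeeping. The two methods encode the same information --- your moment identity is essentially the first power moment expressed at the level of the Weil sum rather than of the code --- but yours is more self-contained (no appeal to the power-moment machinery or to the dual distance) and has the side benefit of handling the $b=0$ and $b\neq 0$ cases uniformly and explicitly, including the point the paper glosses over, namely that the special line contains $q-1$ codewords when $b=0$ but $q$ when $b\neq 0$, with the extra codeword $c_{(0,b)}$ contributing weight $\frac14 q^2$ so that the final counts coincide. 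All the individual computations in your sketch ($N_++N_-=2^{e-h}$, $N_+-N_-=2^{(e-h)/2}$, the parity of $e-h$, and the final tallies) check out against Table~1.
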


\begin{proof} We take $b =0$ for example. Assume that $(u,v)\neq \mathbf{0}$. By Lemma \ref{lem:1}, Lemma \ref{lem:2} and Proposition \ref{prop:1},
$\wt(c_{(u,v)})$ has only three values, that is,
\begin{equation*}
\left\{\begin{array}{ll}
\omega_{1}=\frac{1}{4}q^{2}, \  \\
\omega_{2}=\frac{1}{4}q(q-2^{\frac{e+h}{2}}), \  \\
\omega_{3}=\frac{1}{4}q(q+2^{\frac{e+h}{2}}).
\end{array}
\right.
\end{equation*}

Note that $A_{\omega_{i}}$ is the multiplicity of $\omega_{i}$. By Proposition~\ref{prop:1}, we have
\begin{align*}\label{eq:1}
  A_{\omega_{1}} &= \Big|\Big\{(u,v)\in\F_q^2|\wt(c_{(u,v)}) = \frac{1}{4}q^{2}\Big\}\Big|  \\
  &=\Big|\Big\{(u,v)\in\F_q^2|u\in\F_q,v\in\F_q^*\Big\}\Big| + \Big|\Big\{u\in\F_q^*|S(a,u)=0\Big\}\Big|\\
&=q(q-1) + \Big|\Big\{u\in\F_q^*|\Tr_h(uc^{-1})\neq 1\Big\}\Big|\\
&=q(q-1) + q-2^{e-h}-1,
\end{align*}
where $c\in\F_q$ satisfies $c^{2^h+1} = a$ and we use the fact that the map $\Tr_h$ is a surjective group homomorphism.

 Recall that the minimal distance of the dual code $\C_{D_{(a,0)}}^{\perp}$ is at least $2$. So, by the first two Pless Power Moment (\cite[P. 260]{12HP03} ),
we obtain the system of linear equations as follows:
\begin{equation*}
\left\{\begin{array}{ll}
A_{\omega_{1}}=q(q-1)+q-2^{e-h}-1 \  \\
A_{\omega_{2}}+A_{\omega_{3}}=2^{e-h} \  \\
\omega_{1}A_{\omega_{1}}+\omega_{2}A_{\omega_{2}}+\omega_{3}A_{\omega_{3}}=\frac{1}{2}q^{2}(\frac{1}{2}q^{2}-1).
\end{array}
\right.
\end{equation*}
Solving the above system, we get
\begin{equation*}
\left\{\begin{array}{ll}
A_{\omega_{1}}=q^{2}-2^{e-h}-1 \  \\
A_{\omega_{2}}=2^{e-h-1}+2^{\frac{e-h-2}{2}} \  \\
A_{\omega_{3}}=2^{e-h-1}-2^{\frac{e-h-2}{2}}.
\end{array}
\right.
\end{equation*}

For the case $b \neq 0$, by Lemma \ref{lem:2} and Proposition \ref{prop:1} and using the strategy used in the proof of case $b = 0$,
we can obtain required results. Thus,  We complete the proof.

\end{proof}

\begin{example}
Let $(q,h,a,b)=(2^{5},1,1,0)$. Then, the corresponding code $\C_{D_{(1,0)}}$ has parameters $[ 511,10,192]$ and weight enumerator
$1+10x^{192}+1007x^{256}+6x^{320}.$
\end{example}

\begin{example}
Let $(q,h,a,b)=(2^{5},1,1,1).$ Then, the corresponding code $\C_{D_{(1,1)}}$ has parameters $[ 511,10,192],$ weight enumerator
$1+10x^{192}+1007x^{256}+6x^{320}.$
\end{example}

Recall that $g$ is the fixed generator of $\F_{q}^*$. Denote by $\langle g^{2^{h}+1}\rangle$ the cyclic subgroup generated by $g^{2^{h}+1}$.
Now we will give the weight distribution of linear codes $\C_D$ in \eqref{defcode1} when $e/h$ is even.

\begin{theorem}\label{thm:2}
Let $b =0 $ and $e/h$ be even. Denote $m = e/2$. We have
\begin{enumerate}
\item[(i)] If $a\notin \langle g^{2^{h}+1}\rangle$, then
the code $\C_{D_{(a,0)}}$ is an $[n,2e]$ binary linear code
with the weight distribution in Table 2, where $n=\frac{1}{2}q\Big(q+(-1)^{\frac{m}{h}}2^{m}\Big)-1$.

\begin{table}[ht]\label{tab:2}
\centering
\caption{The weight distribution of the codes $\C_{D_{(a,0)}}$}
\begin{tabular}{|c|c|}
\hline
\textrm{Weight} $\omega$ \qquad& \textrm{Multiplicity} $A_\omega$   \\
\hline
0 \qquad&   1  \\
\hline
$\frac{1}{4}q(q+(-1)^{\frac{m}{h}}2^{m})$ \qquad&  $q(q-1)$  \\
\hline
$\frac{1}{4}q^{2}$  \qquad& $\frac{1}{2}q+(-1)^{\frac{m}{h}}2^{m-1}-1$  \\
\hline
$\frac{1}{4}q(q+(-1)^{\frac{m}{h}}2^{m+1})$  \qquad& $\frac{1}{2}q-(-1)^{\frac{m}{h}}2^{m-1}$  \\
\hline
\end{tabular}
\end{table}
\item[(ii)] If $a\in \langle g^{2^{h}+1}\rangle$, then the code $\C_{D_{(a,0)}}$ is an $[n,2e]$ binary linear code
with the weight distribution in Table 3, where $n=\frac{1}{2}q\Big(q-(-1)^{\frac{m}{h}}2^{m+h}\Big)-1.$

\begin{table}[ht]\label{tab:3}
\centering
\caption{The weight distribution of the codes of $\C_{D_{(a,0)}}$}
\begin{tabular}{|c|c|}
\hline
\textrm{Weight} $\omega$ \qquad& \textrm{Multiplicity} $A_\omega$   \\
\hline
0 \qquad&   1  \\
\hline
$\frac{1}{4}q(q-(-1)^{\frac{m}{h}}2^{m+h})$ \qquad&  $q^{2}-2^{e-2h}$  \\
\hline
$\frac{1}{4}q^{2}$  \qquad& $2^{e-2h-1}-(-1)^{\frac{m}{h}}2^{m-h-1}-1$  \\
\hline
$\frac{1}{4}q(q-(-1)^{\frac{m}{h}}2^{m+h+1})$  \qquad& $2^{e-2h-1}+(-1)^{\frac{m}{h}}2^{m-h-1}$  \\
\hline
\end{tabular}
\end{table}
\end{enumerate}
\end{theorem}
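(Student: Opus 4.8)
The plan is to reduce the whole computation to Proposition~\ref{prop:1}(i) together with the evaluations of $S(a,0)$ and $S(a,u)$ furnished by Lemmas~\ref{lem:3}, \ref{lem:4}(i) and \ref{lem:7}. Throughout $b=0$, and writing $d=\gcd(e,h)=h$ (since $h\mid e$) we have $e/d=e/h$ even, $m=e/2$, and $h\mid m$, so every exponent $m/h$ below is an integer. By Proposition~\ref{prop:1}(i) a nonzero codeword $c_{(u,v)}$ has weight $\frac14 q\big(q+S(a,0)\big)$ when $v\neq 0$ and weight $\frac14 q\big(q+S(a,0)-S(a,u)\big)$ when $v=0$, $u\neq 0$. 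Thus the task is: (a) evaluate $S(a,0)$; (b) determine the values $S(a,u)$ as $u$ runs over $\F_q^*$, with their multiplicities; (c) tally the weights over the $q^2-1$ nonzero pairs $(u,v)$. For (a), Lemma~\ref{lem:3} gives $S(a,0)=(-1)^{m/h}2^m$ when $a\notin\langle g^{2^{h}+1}\rangle$ and $S(a,0)=-(-1)^{m/h}2^{m+h}$ when $a\in\langle g^{2^{h}+1}\rangle$; substituting these into Lemma~\ref{lem:8} yields the two stated lengths $n$, and the dimension $2e$ is the Remark following Proposition~\ref{prop:1}.

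For part (i) I would invoke Lemma~\ref{lem:4}(i): $f(x)=a^{2^{h}}x^{2^{2h}}+ax$ is a permutation of $\F_q$, so for each $u\in\F_q^*$ there is a unique $x_0$ with $f(x_0)=u^{2^{h}}$ and $S(a,u)=(-1)^{m/h}2^m\chi_1\!\big(ax_0^{2^{h}+1}\big)$, which equals $\pm(-1)^{m/h}2^m$; in particular $S(a,u)\neq 0$ for all $u\neq 0$. Since for $v=0$, $u\neq0$ this forces $S(a,0)-S(a,u)\in\{0,\,(-1)^{m/h}2^{m+1}\}$, Proposition~\ref{prop:1}(i) gives exactly three weights: $\omega_1=\frac14 q(q+(-1)^{m/h}2^m)$, attained by all $q(q-1)$ codewords with $v\neq 0$ and by no other codeword; $\omega_2=\frac14 q^2$, attained by the $v=0$ codewords with $\chi_1(ax_0^{2^{h}+1})=1$; and $\omega_3=\frac14 q(q+(-1)^{m/h}2^{m+1})$, attained by the remaining $v=0$ codewords. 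These three are pairwise distinct because $S(a,0)\neq0$. Put $N_+=A_{\omega_2}$ and $N_-=A_{\omega_3}$, so $N_++N_-=q-1$. A second relation comes from the identity $\sum_{u\in\F_q}S(a,u)=\sum_{x\in\F_q}\chi_1(ax^{2^{h}+1})\sum_{u\in\F_q}\chi_1(ux)=q$ (only $x=0$ contributes): subtracting $S(a,0)$ gives $(N_+-N_-)(-1)^{m/h}2^m=q-(-1)^{m/h}2^m$, i.e.\ $N_+-N_-=(-1)^{m/h}2^m-1$. Solving this $2\times 2$ system produces Table~2. (Equivalently, since $A_{\omega_1}=q(q-1)$ is known outright, one could close with the first two Pless power moments exactly as in the proof of Theorem~\ref{thm:1}.)

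For part (ii) I would use the \emph{corrected} Lemma~\ref{lem:7}: for $u\in\F_q^*$, either $f(x)=u^{2^{h}}$ is unsolvable and $S(a,u)=0$, or it has a solution $x_0$ and $S(a,u)=-(-1)^{m/h}2^{m+h}\chi_1(ax_0^{2^{h}+1})\in\{\pm(-1)^{m/h}2^{m+h}\}$. Because $f$ is an $\F_2$-linear map of $\F_q$ whose kernel has size $2^{2h}$ (Theorem~3.1 of \cite{26LL18}), its image has size $2^{e-2h}$; and since $u\mapsto u^{2^{h}}$ is a bijection of $\F_q$, the number of $u\in\F_q$ with $u^{2^{h}}$ in that image is $2^{e-2h}$. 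Hence among $u\in\F_q^*$ exactly $2^{e-2h}-1$ give $S(a,u)\neq0$ and $N_0:=q-2^{e-2h}$ give $S(a,u)=0$. Proposition~\ref{prop:1}(i) then yields the weight $\frac14 q(q-(-1)^{m/h}2^{m+h})$ — shared by the $q(q-1)$ codewords with $v\neq0$ \emph{and} the $N_0$ codewords with $v=0$, $S(a,u)=0$, total multiplicity $q^2-2^{e-2h}$ — the weight $\frac14 q^2$ (the $v=0$ codewords with $S(a,u)=S(a,0)$), and the weight $\frac14 q(q-(-1)^{m/h}2^{m+h+1})$ (those with $S(a,u)=-S(a,0)$). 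Writing $N_-$ and $N_+$ for the last two multiplicities respectively, $N_++N_-=2^{e-2h}-1$, and $\sum_{u\in\F_q}S(a,u)=q$ gives $(N_+-N_-)(-1)^{m/h}2^{m+h}-(-1)^{m/h}2^{m+h}=q$, so $N_+-N_-=(-1)^{m/h}2^{m-h}+1$; solving gives Table~3.

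The whole argument is driven by results already in hand, so no new character-sum estimate is needed; the points requiring care are two. First, one must use the corrected evaluation of $S(a,u)$ from Lemma~\ref{lem:7} rather than Lemma~\ref{lem:4}(ii): the latter would artificially split part (ii) according to $\Tr_h(a)$ and produce incorrect multiplicities. Second — and this is the real subtlety of part (ii) — the $v\neq0$ codewords and the $v=0$ codewords with $S(a,u)=0$ all have the \emph{same} weight, which is exactly why the leading multiplicity is $q^2-2^{e-2h}$ rather than $q(q-1)$; correspondingly one needs the count $N_0$, obtained from the kernel/image of the linearized polynomial $f$, together with the relation extracted from $\sum_{u}S(a,u)=q$ (or from the first Pless power moment), to pin down the remaining two multiplicities.
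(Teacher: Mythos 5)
Your proposal is correct and follows essentially the same route as the paper: Proposition~\ref{prop:1}(i) plus Lemmas~\ref{lem:3}, \ref{lem:4}(i) and the corrected Lemma~\ref{lem:7}, the direct count of $A_{\omega_1}$ (including the key observation in part (ii) that the $v\neq 0$ codewords and the $v=0$ codewords with $S(a,u)=0$ share the same weight, counted via the kernel/image of the linearized polynomial $f$), and then two linear relations to pin down the remaining multiplicities. The only cosmetic difference is that you close with the identity $\sum_{u\in\F_q}S(a,u)=q$ where the paper uses the first Pless power moment; as you note yourself, these are interchangeable here, and your resulting multiplicities agree with Tables~2 and~3.
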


\begin{proof} Assume that $(u,v)\neq \mathbf{0}$. We discuss case by case on the term of $a\in \langle g^{2^{h}+1}\rangle$ or $a \notin\langle g^{2^{h}+1}\rangle$.

(i)  Suppose $a\notin \langle g^{2^{h}+1}\rangle$. As the proof of Theorem \ref{thm:1}, by Lemma \ref{lem:3}, Lemma \ref{lem:4} and Proposition \ref{prop:1},
$\wt(c_{(u,v)})$ has only three values, that is,
\begin{equation*}
\left\{\begin{array}{ll}
\omega_{1}=\frac{1}{4}q\Big(q+(-1)^{\frac{m}{h}}2^{m}\Big), \  \\
\omega_{2}=\frac{1}{4}q^{2}, \  \\
\omega_{3}=\frac{1}{4}q\Big(q+(-1)^{\frac{m}{h}}2^{m+1}\Big).
\end{array}
\right.
\end{equation*}
and we can obtain a system of linear equations as follows:
\begin{equation*}
\left\{\begin{array}{ll}
A_{\omega_{1}}=q(q-1) \  \\
A_{\omega_{2}}+A_{\omega_{3}}=q-1 \  \\
\omega_{1}A_{\omega_{1}}+\omega_{2}A_{\omega_{2}}+\omega_{3}A_{\omega_{3}}=\frac{1}{2}q^{2}n,
\end{array}
\right.
\end{equation*}
which concludes that
\begin{equation*}
\left\{\begin{array}{ll}
A_{\omega_{1}}=q(q-1) \  \\
A_{\omega_{2}}=\frac{1}{2}q+(-1)^{\frac{m}{h}}2^{m-1}-1 \  \\
A_{\omega_{3}}=\frac{1}{2}q-(-1)^{\frac{m}{h}}2^{m-1}.
\end{array}
\right.
\end{equation*}

(ii) Suppose $a\in \langle g^{2^{h}+1}\rangle$. By Lemma \ref{lem:3}, Lemma \ref{lem:7} and Proposition \ref{prop:1},
$\wt(c_{(u,v)})$ has only three values, that is,
$$
\left\{\begin{array}{ll}
\omega_{1}=\frac{1}{4}q\Big(q-(-1)^{\frac{m}{h}}2^{m+h}\Big), \  \\
\omega_{2}=\frac{1}{4}q^{2}, \  \\
\omega_{3}=\frac{1}{4}q\Big(q-(-1)^{\frac{m}{h}}2^{m+h+1}\Big).
\end{array}
\right.
$$

Recall that $ f(x)=a^{2^{h}}x^{2^{2h}}+ax \in \F_{q}[x]$ is a linear polynomial, so, it can be regarded as a linear map from $\F_q$ to itself.
Note that the number of solutions of $f(x)=0$ in $\F_{q}$ is $2^{2h}$. Hence, we have $\dim(\ker f)=2h$ and $\dim(f(\F_{q}))=e-2h$, which concludes that
\begin{align*}\label{eq:1}
  A_{\omega_{1}} &= \Big|\Big\{(u,v)\in\F_q^2|\wt(c_{(u,v)}) = \frac{1}{4}q\big(q-(-1)^{\frac{m}{h}}2^{m+h}\big)\Big\}\Big|  \\
  &=\Big|\Big\{(u,v)\in\F_q^2|u\in\F_q,v\in\F_q^*\Big\}\Big| + \Big|\Big\{u\in\F_q^*|S(a,u)=0\Big\}\Big|\\
&=q(q-1) + \Big|\Big\{u\in\F_q^*|f(x)=u^{2^h} \ \textrm{has no solution in}\  \F_q\Big\}\Big|\\
&=q(q-1) + q-2^{e-2h}.
\end{align*}

Thus,
we obtain the system of linear equations as follows:
\begin{equation*}
\left\{\begin{array}{ll}
A_{\omega_{1}}=q(q-1)+q-2^{e-2h} \  \\
A_{\omega_{2}}+A_{\omega_{3}}=2^{e-2h}-1 \  \\
\omega_{1}A_{\omega_{1}}+\omega_{2}A_{\omega_{2}}+\omega_{3}A_{\omega_{3}}=\frac{1}{2}q^{2}n,
\end{array}
\right.
\end{equation*}
which concludes that
\begin{equation*}
\left\{\begin{array}{ll}
A_{\omega_{1}}=q^{2}-2^{e-2h} \  \\
A_{\omega_{2}}=2^{e-2h-1}-(-1)^{\frac{m}{h}}2^{m-h-1}-1 \  \\
A_{\omega_{3}}=2^{e-2h-1}+(-1)^{\frac{m}{h}}2^{m-h-1}.
\end{array}
\right.
\end{equation*}
\end{proof}

\begin{example}
Let $(q,h,a,b)=(2^{6},1,g,0).$ Then the corresponding code $\C_{D_{(g,0)}}$ has parameters $[1791,12,768]$ and weight enumerator
$1+36x^{768}+4032x^{896}+27x^{1024}$.
\end{example}

\begin{example}
Let $(q,h,a,b)=(2^{6},1,1,0).$ Then, the corresponding code $\C_{D_{(1,0)}}$ has parameters $[ 2559,12,1024]$ and weight enumerator
$1+9x^{1024}+4080x^{1280}+6x^{1536}.$
\end{example}

\begin{example}
Let $(q,h,a,b)=(2^{6},1,g^{3},0).$ Then the corresponding code $\C_{D_{(g^{3},0)}}$ has parameters $[ 2559,12,1024]$ and weight enumerator
$1+9x^{1024}+4080x^{1280}+6x^{1536}.$
\end{example}

\begin{theorem}\label{thm:3}  Let $b\neq0$ and $e/h$ be even. Denote $m = e/2$. We have
\begin{enumerate}
\item[(i)] If $a\notin \langle g^{2^{h}+1}\rangle$, then
the code $\C_{D_{(a,b)}}$ is a $[\frac{1}{2}q^{2}-1,2e]$ binary linear code
with the weight distribution in Table 4.
\begin{table}[ht]\label{tab:4}
\centering
\caption{The weight distribution of the codes $\C_{D_{(a,b)}}$}
\begin{tabular}{|c|c|}
\hline
\textrm{Weight} $w$ \qquad& \textrm{Multiplicity} $A$   \\
\hline
0 \qquad&   1  \\
\hline
$\frac{1}{4}q^{2}$ \qquad&  $q^{2}-q-1$  \\
\hline
$\frac{1}{4}q(q-2^{m})$  \qquad& $\frac{1}{2}q+2^{m-1}$  \\
\hline
$\frac{1}{4}q(q+2^{m})$  \qquad& $\frac{1}{2}q-2^{m-1}$  \\
\hline
\end{tabular}
\end{table}
\item[(ii)] If $a\in \langle g^{2^{h}+1}\rangle$, then the code $\C_{D_{(a,b)}}$ is a $[\frac{1}{2}q^{2}-1,2e]$ binary linear code
with the weight distribution in Table 5.

\begin{table}[ht]\label{tab:5}
\centering
\caption{The weight distribution of the codes $\C_{D_{(a,b)}}$}
\begin{tabular}{|c|c|}
\hline
\textrm{Weight} $\omega$ \qquad& \textrm{Multiplicity} $A_\omega$   \\
\hline
0 \qquad&   1  \\
\hline
$\frac{1}{4}q^{2}$ \qquad&  $q^{2}-2^{e-2h}-1$  \\
\hline
$\frac{1}{4}q(q-2^{m+h})$  \qquad& $2^{e-2h-1}+2^{m-h-1}$  \\
\hline
$\frac{1}{4}q(q+2^{m+h})$  \qquad& $2^{e-2h-1}-2^{m-h-1}$  \\
\hline
\end{tabular}
\end{table}
\end{enumerate}
\end{theorem}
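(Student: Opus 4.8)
The plan is to run the argument of Theorems~\ref{thm:1} and~\ref{thm:2} almost verbatim; the only new ingredient is the evaluation of $S(a,u)=S_{h}(a,u)$ for $b\neq 0$ and $e/h$ even. By Proposition~\ref{prop:1}(ii), for every $(u,v)\neq\mathbf 0$ we have $\wt(c_{(u,v)})=\frac{1}{4}q^{2}$ when $v\neq b$, and $\wt(c_{(u,v)})=\frac{1}{4}q\bigl(q-S(a,u)\bigr)$ when $v=b$. Since $b\neq 0$, Lemma~\ref{lem:8} gives $n=\frac{1}{2}q^{2}-1$; by the remark after Proposition~\ref{prop:1} the map $(u,v)\mapsto c_{(u,v)}$ is a linear isomorphism, so $\dim\C_{D_{(a,b)}}=2e$; and by Remark~\ref{rem:2} the dual distance is at least~$2$. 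There are $q^{2}-q-1$ nonzero codewords with $v\neq b$, all of weight $\frac{1}{4}q^{2}$, and $q$ codewords $c_{(u,b)}$; so everything reduces to the distribution of $S(a,u)$ as $u$ ranges over $\F_{q}$. The one computational lemma I will use is
\[
\sum_{u\in\F_{q}}S(a,u)=\sum_{x\in\F_{q}}\chi_{1}\!\left(ax^{2^{h}+1}\right)\sum_{u\in\F_{q}}\chi_{1}(ux)=q,
\]
which, given the obvious count $\sum A_{i}=2^{2e}-1$ and the direct evaluation of $A_{\omega_{1}}$ below, plays the role of the first Pless power moment used in the proof of Theorem~\ref{thm:1}.

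For part~(i), $a\notin\langle g^{2^{h}+1}\rangle$ and $e/h$ is even, so Lemma~\ref{lem:4}(i) applies: $f(x)=a^{2^{h}}x^{2^{2h}}+ax$ is a permutation polynomial of $\F_{q}$, and for the unique $x_{0}$ with $f(x_{0})=u^{2^{h}}$ one has $S(a,u)=(-1)^{m/h}2^{m}\chi_{1}\!\left(ax_{0}^{2^{h}+1}\right)\in\{2^{m},-2^{m}\}$. Hence $\wt(c_{(u,v)})$ takes only the values $\omega_{1}=\frac{1}{4}q^{2}$, $\omega_{2}=\frac{1}{4}q(q-2^{m})$, $\omega_{3}=\frac{1}{4}q(q+2^{m})$; every $c_{(u,b)}$ contributes to $\omega_{2}$ or $\omega_{3}$, so $A_{\omega_{1}}=q^{2}-q-1$ and $A_{\omega_{2}}+A_{\omega_{3}}=q$. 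Since $A_{\omega_{2}}=\#\{u:S(a,u)=2^{m}\}$ and $A_{\omega_{3}}=\#\{u:S(a,u)=-2^{m}\}$, the identity above reads $2^{m}A_{\omega_{2}}-2^{m}A_{\omega_{3}}=q$, i.e.\ $A_{\omega_{2}}-A_{\omega_{3}}=2^{m}$, and solving the two linear equations yields $A_{\omega_{2}}=\frac{1}{2}q+2^{m-1}$, $A_{\omega_{3}}=\frac{1}{2}q-2^{m-1}$, which is Table~4.

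For part~(ii), $a\in\langle g^{2^{h}+1}\rangle$, and here I would invoke the corrected Lemma~\ref{lem:7} instead of Lemma~\ref{lem:4}(ii) --- this is the sole place where the correction made in Section~\ref{Pre} is needed. By Lemma~\ref{lem:7}, $S(a,u)=0$ unless $f(x)=u^{2^{h}}$ has a solution $x_{0}$, in which case $S(a,u)=-(-1)^{m/h}2^{m+h}\chi_{1}\!\left(ax_{0}^{2^{h}+1}\right)\in\{2^{m+h},-2^{m+h}\}$. Since $f$ is an $\F_{2}$-linear map on $\F_{q}$ for which $f(x)=0$ has $2^{2h}$ solutions (as in the proof of Theorem~\ref{thm:2}(ii), via Theorem~3.1 of~\cite{26LL18}), its image has size $2^{e-2h}$; and as $u\mapsto u^{2^{h}}$ is a bijection of $\F_{q}$, exactly $2^{e-2h}$ values of $u$ make $f(x)=u^{2^{h}}$ solvable. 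Hence among the $q$ codewords $c_{(u,b)}$ precisely $q-2^{e-2h}$ have weight $\omega_{1}=\frac{1}{4}q^{2}$, so $A_{\omega_{1}}=q^{2}-2^{e-2h}-1$ and $A_{\omega_{2}}+A_{\omega_{3}}=2^{e-2h}$, where $\omega_{2}=\frac{1}{4}q(q-2^{m+h})$, $\omega_{3}=\frac{1}{4}q(q+2^{m+h})$. The identity above now reads $2^{m+h}A_{\omega_{2}}-2^{m+h}A_{\omega_{3}}=q$, i.e.\ $A_{\omega_{2}}-A_{\omega_{3}}=2^{m-h}$, and solving gives $A_{\omega_{2}}=2^{e-2h-1}+2^{m-h-1}$, $A_{\omega_{3}}=2^{e-2h-1}-2^{m-h-1}$, which is Table~5.

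I do not expect a genuine obstacle: once the three weights are pinned down, the rest is the same linear bookkeeping as in Theorems~\ref{thm:1} and~\ref{thm:2}. The two points to watch are that in each part the codewords with $v=b$ really are the only ones whose weight can differ from $\frac{1}{4}q^{2}$ (immediate from Proposition~\ref{prop:1}(ii) as long as $2^{m+h}\neq q$), and that in part~(ii) the corrected Lemma~\ref{lem:7} is used, so that the factor $-(-1)^{m/h}2^{m+h}\chi_{1}(ax_{0}^{2^{h}+1})$ is applied uniformly, regardless of $\Tr_{h}(a)$. A minor caveat worth noting is the boundary case $e/h=2$, where $2^{m+h}=q$ and the exponent $e-2h-1$ becomes negative; the statement tacitly presumes $e/h\ge 4$, and this small case should be excluded or handled separately.
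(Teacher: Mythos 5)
Your proposal is correct and follows essentially the same route as the paper: the same three weights extracted from Lemma~\ref{lem:4}(i) (resp.\ the corrected Lemma~\ref{lem:7}) via Proposition~\ref{prop:1}(ii), the same direct count of $A_{\omega_{1}}$, and a final linear relation that in your write-up comes from the elementary identity $\sum_{u\in\F_{q}}S(a,u)=q$ --- which, once the weight formulas are substituted, is precisely the first Pless power moment $\sum_{\omega}\omega A_{\omega}=\frac{1}{2}q^{2}(\frac{1}{2}q^{2}-1)$ that the paper invokes, just derived by hand. Your closing caveat about $e/h=2$ is also well taken: there $2^{e-2h-1}$ fails to be an integer and $\omega_{2}=0$ in part~(ii), so the statement tacitly requires $e/h\ge 4$ in that case, a restriction the paper's proof never addresses.
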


\begin{proof} Assume that $(u,v)\neq \mathbf{0}$. We also discuss case by case by $a\in\langle g^{2^{h}+1}\rangle$ or $a\notin\langle g^{2^{h}+1}\rangle$.

(i)  Suppose $a\notin \langle g^{2^{h}+1}\rangle$. As the proof of Theorem \ref{thm:2}, by Lemma \ref{lem:4} and Proposition \ref{prop:1},
$\wt(c_{(u,v)})$ has only three values, that is,
$$
\left\{\begin{array}{ll}
\omega_{1}=\frac{1}{4}q^{2}, \  \\
\omega_{2}=\frac{1}{4}q(q-2^{m}), \  \\
\omega_{3}=\frac{1}{4}q(q+2^{m}).
\end{array}
\right.
$$
Similarly,
we can obtain a system of linear equations as follows:
$$
\left\{\begin{array}{ll}
A_{\omega_{1}}=q(q-1)-1 \  \\
A_{\omega_{2}}+A_{\omega_{3}}=q \  \\
\omega_{1}A_{\omega_{1}}+\omega_{2}A_{\omega_{2}}+\omega_{3}A_{\omega_{3}}=\frac{1}{2}q^{2}(\frac{1}{2}q^{2}-1).
\end{array}
\right.
$$
Solving the system , we get
$$
\left\{\begin{array}{ll}
A_{\omega_{1}}=q^{2}-q-1 \  \\
A_{\omega_{2}}=\frac{1}{2}q+2^{m-1} \  \\
A_{\omega_{3}}=\frac{1}{2}q-2^{m-1}.
\end{array}
\right.
$$

(ii) Suppose $a\in \langle g^{2^{h}+1}\rangle$. By Lemma \ref{lem:7} and Proposition \ref{prop:1},
$\wt(c_{(u,v)})$ has only three values, that is,
$$
\left\{\begin{array}{ll}
\omega_{1}=\frac{1}{4}q^{2}, \  \\
\omega_{2}=\frac{1}{4}q(q-2^{m+h}), \  \\
\omega_{3}=\frac{1}{4}q(q+2^{m+h}).
\end{array}
\right.
$$
Similarly,
we can obtain a system of linear equations as follows:
$$
\left\{\begin{array}{ll}
A_{\omega_{1}}=q^{2}-1-2^{e-2h} \  \\
A_{\omega_{2}}+A_{\omega_{3}}=2^{e-2h} \  \\
\omega_{1}A_{\omega_{1}}+\omega_{2}A_{\omega_{2}}+\omega_{3}A_{\omega_{3}}=\frac{1}{2}q^{2}(\frac{1}{2}q^{2}-1).
\end{array}
\right.
$$
Solving the system, we get
$$
\left\{\begin{array}{ll}
A_{\omega_{1}}=q^{2}-1-2^{e-2h} \  \\
A_{\omega_{2}}=2^{e-2h-1}+2^{m-h-1} \  \\
A_{\omega_{3}}=2^{e-2h-1}-2^{m-h-1}.
\end{array}
\right.
$$
\end{proof}

\begin{example}
Let $(q,h,u,v)=(2^{6},1,g,1).$ Then, the corresponding code $\C_{D_{(g,1)}}$ has parameters $[ 2047,12,896]$ and weight enumerator
$1+36x^{896}+4031x^{1024}+28x^{1152}$.
\end{example}

\begin{example}
Let $(q,h,u,v)=(2^{6},1,1,1).$ Then, the corresponding code $\C_{D_{(1,1)}}$ has parameters $[ 2047,12,768]$ and weight enumerator
$1+10x^{768}+4079x^{1024}+6x^{1280}$.
\end{example}

\section{Concluding Remarks}
We correct a wrong result of an exponential sum and
construct several families of binary linear codes by the generalized method of defining set.
Using the exponential sum theory, we determine the weight distributions of the codes.
It is shown that the presented linear codes have three nonzero weights. By Magma,
some concrete examples have been given to verify the correctness of their corresponding results.

Let $w_{\min}$ and $w_{\max}$ denote the minimum and maximum nonzero weight of the linear code $\C_{D_{(a,b)}},$ respectively.
If the code $\C_{D_{(a,b)}}$ satisfies one of the following five conditions, then it can be easily checked that
$$
 \frac{w_{\min}}{w_{\max}}> \frac{1}{2}.
$$
\begin{enumerate}
\item In Theorems \ref{thm:1} and $e> h+1$.
\item In Theorems \ref{thm:2}(i) and $e> 3-(-1)^{\frac{m}{h}}$.
\item In Theorems \ref{thm:2}(ii) and $m> h+1+(-1)^{\frac{m}{h}}$.
\item In Theorems \ref{thm:3}(i) and and $m>1.$
\item In Theorems \ref{thm:3}(ii) and $m>h+1.$
\end{enumerate}

By the results in \cite{21YD06}, most of the codes $\C_{D_{(a,b)}}$ in the paper
are suitable for constructing secret sharing schemes with interesting properties.

\par  \vskip 0.5 cm



\end{document}